\numberwithin{equation}{section}
\newtheorem{theorem}{Theorem}[section]
\newtheorem{corollary}[theorem]{Corollary}
\newtheorem{definition}[theorem]{Definition}
\newtheorem{proposition}[theorem]{Proposition}
\newtheorem{remark}[theorem]{Remark}
\newenvironment{proof}[1][Proof]{\textbf{#1.} }{\ \rule{0.5em}{0.5em}}
\DeclareMathOperator*{\esssup}{ess\,sup}
\DeclareMathOperator*{\essinf}{ess\,inf}
\begin{document}

\title{\textbf{Pseudo Linear Pricing Rule for Utility Indifference Valuation}
\footnotetext{The authors thank the editor Martin Schweizer, an
associate editor, and two referees for their valuable suggestions,
and Damiano Brigo, Rama Cont, David Hobson, Monique Jeanblanc,
Lishang Jiang, Eva L\"utkebohmert, Andrea Macrina, Shige Peng,
Xingye Yue, and Thaleia Zariphopoulou for helpful discussions. The
authors thank participants in seminars at the University of
Freiburg, and King's College London, and at the Sino-French Summer
Institute in Stochastic Modeling and Applications (Beijing, June
2011), Stochastic Analysis: A UK-China Workshop (Loughborough, July
2011), and the Fourth International Conference: Mathematics in
Finance (South Africa, August 2011) where work in progress was
presented.}}

\author[$\dag$,$\sharp$]{\small{\textsc{Vicky Henderson}}}

\author[$\ddag$,$\sharp$]{\small{\textsc{Gechun Liang}}}

\affil[$\dag$]{\small{\textsc{Department of Statistics, University
of Warwick, Coventry, CV4 7AL, U.K.}}\\

\texttt{Vicky.Henderson@warwick.ac.uk}}

\affil[$\ddag$]{\small{\textsc{Department of Mathematics, King's College London, London, WC2R 2LS, U.K.}}\\

\texttt{gechun.liang@kcl.ac.uk}}

\affil[$\sharp$]{\small{\textsc{Oxford-Man Institute, University of
Oxford, Oxford, OX2 6ED, U.K.}}}
\date{}
\maketitle

\begin{abstract}
This paper considers exponential utility indifference pricing for a
multidimensional non-traded assets model, and provides two linear
approximations for the utility indifference price. The key tool is a
probabilistic representation for the utility indifference price by
the solution of a functional differential equation, which is termed
\emph{pseudo linear pricing rule}. We also provide an alternative
derivation of the quadratic BSDE representation for the utility
indifference price.\\

\noindent\textit{Keywords}: utility indifference pricing, \and
quadratic BSDE, \and FBSDE, \and functional differential equation.\\

\noindent\textit{Mathematics Subject Classification (2010)}: 91G40
\and 91G80 \and 60H30.
\end{abstract}

\newpage
\section{Introduction}

In this paper, we consider \emph{exponential} utility indifference
pricing in a multidimensional non-traded assets setting. Our
interest is in pricing and hedging derivatives written on assets
that are not traded. The market is incomplete as the risks arising
from having exposure to non-traded assets cannot be fully hedged.
There has been considerable research in the area of exponential
utility indifference valuation, but despite the interest in this
pricing and hedging approach, there have been relatively few
explicit formulas available. The well known {\it one dimensional
non-traded assets model} is an exception, and in a Markovian
framework with a derivative written on a single non-traded asset,
and partial hedging in a financial asset, Henderson and Hobson
\cite{Henderson2}, Henderson \cite{Henderson1}, and Musiela and
Zariphopoulou \cite{Musiela} use the Cole-Hopf transformation (or
\emph{distortion power}) to linearize the nonlinear partial
differential equations (\emph{PDEs} for short) for the value
function. Subsequent generalizations of the model from Tehranchi
\cite{Tehranchi}, Frei and Schweizer \cite{FSI} \cite{Schweizer}
have shown the exponential utility indifference price can still be
written in a closed-form expression similar to that known for the
Brownian setting, although the structure of the formula can be much
less explicit. On the other hand, Davis \cite{Davis} uses the
duality to derive an explicit formula for the optimal hedging
strategy, and Becherer \cite{Bec0} shows that the dual pricing
formula exists even in a general semimartingale setting.

We study exponential utility indifference valuation in a
multidimensional setting with the aim of developing a pricing
methodology. The main tool that we use to prescribe the pricing
dynamics is backward stochastic differential equation with quadratic
growth (\emph{quadratic BSDE} for short). It is well known in the
literature that the utility indifference price can be written as a
nonlinear expectation of the payoff under the original physical
measure, and the nonlinear expectation is often specified by a
quadratic BSDE. Several authors derive quadratic BSDE
representations of exponential utility indifference values in models
of varying generality - see Hu et al \cite{Hu}, Mania and Schweizer
\cite{MS}, Becherer \cite{Bec}, Morlais \cite{Morlais}, Frei and
Schweizer \cite{FSI} \cite{Schweizer}, Bielecki and Jeanblanc
\cite{Bielecki1}, and Ankirchner et al \cite{Imkeller1} among
others. Their derivations use the martingale optimality principle.

Our first contribution is to provide an alternative approach to
derive the quadratic BSDE representation of the utility indifference
price. The martingale optimality principle does not play any role in
our derivation. Instead, we consider the associated utility
maximization problems for utility indifference valuation from a
risk-sensitive control perspective. We first transform our utility
maximization problems into risk-sensitive control problems, and then
use the comparison principle for a family of quadratic BSDEs indexed
by the trading strategies to derive the pricing dynamics for the
utility indifference price. The details are presented in Theorem
\ref{Lemma for n nontraded assets}. We call such a quadratic BSDE
representation for the utility indifference price a \emph{nonlinear
pricing rule}.

With regards to the theory of quadratic BSDEs, the existence and
uniqueness of bounded solutions was first proved in a Brownian
setting by Kobylanski \cite{Kobylanski}, and was extended to
unbounded solutions and convex driver by Briand and Hu
\cite{MR2257138,MR2391164}, and Delbaen et al
\cite{Delbaen0,Delbaen}. The corresponding semimartingale case for
bounded solutions may be found in Morlais \cite{Morlais} and
Tevzadze \cite{Tevzadze}, where in the former, the main theorems of
\cite{Kobylanski} and \cite{Hu} were extended, and in the latter, a
fixed point argument with \emph{BMO} martingale techniques was
employed. See also Mocha and Westray \cite{Mocha} for the extension
to unbounded solutions. In addition, Ankirchner et al \cite{AIR} and
Imkeller et al \cite{IRR} consider the differentiability of
quadratic BSDEs, and Frei et al \cite{FMS} give convex bounds for
the solutions. More recently, Barrieu and El Karoui \cite{Barrieu}
introduce a notion of quadratic semimartingale to study the
stability of solutions, while Briand and Elie \cite{Briand} find a
simplified approach which was used to study the corresponding
delayed equations. Finally, quadratic BSDEs with jumps were studied
by Becherer \cite{Bec} for bounded solutions, and by El Karoui et al
\cite{Matoussi} for unbounded solutions.

Our main contribution is the provision of a new pricing formula for
the utility indifference price, which we call a \emph{pseudo linear
pricing rule}. In Theorem \ref{lemmapesudo}, we represent the
utility indifference price as a linear expectation of the payoff
plus a pricing premium, where the latter is represented by the
solution of a functional differential equation. Such an idea is
motivated by Liang et al \cite{LLQ2009}, where they transform BSDEs
to functional differential equations, and solve them on a general
filtered probability space. One of the advantages of such a
representation is that we only need to solve a functional
differential equation in order to calculate the utility indifference
price. Moreover, the functional differential equation runs forwards,
avoiding the conflicting nature between the backward equation and
the underlying forward equation.

To apply such a \emph{pseudo linear pricing rule}, we provide two
linear approximations for the utility indifference price. In
contrast to \cite{LLQ2009}, where the driver is Lipschitz
continuous, the driver of the functional differential equation
considered in this paper is quadratic. Nevertheless, we can employ
Picard iteration to approximate its solution. The first linear
approximation is based on perturbations of the functional
differential equation, the idea of which is motivated by Proposition
2 of Tevzadze \cite{Tevzadze}. The second linear approximation is
based on a nonlinear version of Girsanov's transformation in order
to vanish the driver. Such an idea has appeared in the BSDE
literature, for example, as Proposition 11 in Mania and Schweizer
\cite{MS} and measure solutions of BSDEs in Ankirchner et al
\cite{Imkeller0},
where they model the terminal data (the payoff) as a general random
variable. In contrast, as we specify the dynamics of the underlying
and the payoff structure, a coupled forward backward stochastic
differential equation (\emph{FBSDE} for short) appears naturally.

The paper is organized as follows: In Section \ref{sec-model}, we
present our multidimensional non-traded assets model, and present
the nonlinear pricing rule, i.e. the quadratic BSDE representation
for the utility indifference price. In Section \ref{sec-pseudo}, we
present our pseudo linear pricing rule, i.e. the functional
differential equation representation for the utility indifference
price, and present two linear approximations for the utility
indifference price based on such a representation formula.


\section{Quadratic BSDEs and Nonlinear Pricing Rule} \label{sec-model}

Let ${\cal{W}}=(W^1,\cdots,W^d)$ be a $d$-dimensional Brownian
motion on some filtered probability space $(\Omega,\mathcal{F},
\{{\mathcal{F}}_t\} ,\mathbf{P})$ satisfying the \emph{usual
conditions}, where $\mathcal{F}_t$ is the augmented $\sigma$-algebra
generated by $ ({\cal{W}}_u:0 \leq u \leq t)$. The market consists
of a traded financial index $P$, whose discounted price process is
given by
\begin{equation}\label{P_equ}
P_t=P_0+\int_0^tP_s(\mu^P_sds+\langle \sigma_s^P,
d\mathcal{W}_s\rangle),
\end{equation}
and a set of observable but non-traded assets
$\mathcal{S}=(S^1,\cdots,S^d)$, whose discounted price processes are
given by
\begin{equation}\label{dynamics of nontraded assets }
S_t^i=S_0^i+\int_0^tS_s^i(\mu^i_sds+\langle \sigma^i_s,
d\mathcal{W}_s\rangle)
\end{equation}
for $i=1,\cdots,d$. $\langle\cdot,\cdot\rangle$ denotes the inner
product in $\mathbb{R}^d$ with its Euclidean norm $||\cdot||$. We
have $\mu^P_t, \mu^i_t \in \mathbb{R} $,
$\sigma^P_t=(\sigma^{P1}_t,\cdots,\sigma^{Pd}_t) \in \mathbb{R}^d$
and $\sigma^i_t=(\sigma^{i1}_t,\cdots,\sigma^{id}_t) \in
\mathbb{R}^d$. There is also a risk-free bond or bank account with
discounted price $B_t=1$ for $t\geq 0$.

Our interest will be in pricing and hedging (path-dependent)
contingent claims written on the non-traded assets $\mathcal{S}$.
Specifically, we are concerned with contracts with the payoff at
maturity $T$ of $g({\cal{S}}_{\cdot})$, which may depend on the
whole path of $\mathcal{S}$. We impose the following assumptions,
which will hold throughout:
\begin{itemize}
\item \textbf{Assumption (A1)}: All the coefficients
$\mu_t^{i}(\omega)$, $\sigma_t^{i}(\omega)$, $\mu_t^P(\omega)$ and
$\sigma_t^{P}(\omega)$ are $\mathcal{F}_t$-predictable and uniformly
bounded in $(t,\omega)$.

\item \textbf{Assumption (A2)}: The volatility
for the financial index $P$ is uniformly elliptic:
$||\sigma_t^P(\omega)||\geq\epsilon>0$ for all $(t,\omega)$.

\item \textbf{Assumption (A3)}: The payoff $g(\mathcal{S}_{\cdot})$, as a functional of the stochastic process
$\mathcal{S}$, is positive and bounded.

\end{itemize}

Our approach is to consider utility indifference valuation for such
contingent claims. For a general overview of utility indifference
valuation, we refer to the monograph edited by Carmona
\cite{MR2547456}, and especially the survey article by Henderson and
Hobson \cite{Henderson3} therein. For this we need to consider the
optimization problem for the investor both with and without the
option. The investor has initial wealth $X_t\in\mathcal{F}_t$ at any
starting time $t\in[0,T]$, and is able to trade the financial index
with price $P_t$ (and riskless bond with price $1$). This will
enable the investor to partially hedge the risks she is exposed to
via her position in the claim.

The holder of the option has an exponential utility function with
respect to her terminal wealth:
$$U_T(x)=-e^{-\gamma x}\ \ \ \text{for}\ \gamma> 0\ \text{and}\ x\in\mathbb{R}.$$

The investor holds $\lambda$ units of the claim, whose price at time
$t\in[0,T]$ is denoted as $\mathfrak{C}_t^{\lambda}$ and is to be
determined,
and invests her remaining wealth $X_t-\mathfrak{C}_t^{\lambda}$ in
the financial index $P$. The investor will follow an admissible
trading strategy
\begin{align*}
\pi\in\mathcal{A}_{ad}[0,T]=&\left\{\pi:\pi\ \text{is}\
\mathcal{F}_t\text{-predictable,}\
\sup_{\tau}\left\| E^{\mathbf{P}}\left[\left.\int_{\tau}^T|\pi_t|^2dt\right|\mathcal{F}_{\tau}\right]\right\|_{\infty}<\infty\right.\\
&\left.\text{for any}\ \mathcal{F}_t\text{-stopping time}\
\tau\in[0,T],\ \text{and\ moreover,}\right.\\[+0.2cm]
&\left.\epsilon\leq
E^{\mathbf{P}}\left[\left.e^{-\gamma\int_t^{T}\frac{\pi_s}{P_s}dP_s}\right|\mathcal{F}_t\right]\leq
K \ \text{for\ a.e}\ (t,\omega)\in[0,T]\times\Omega\right\},
\end{align*}
for some constants $K\geq \epsilon>0$, which results in the
following wealth equation: For $0\leq t\leq s\leq T$,
\begin{align}\label{X_equ}
X_s^{X_t-\mathfrak{C}_t^{\lambda}}(\pi)&=X_t-
\mathfrak{C}_t^{\lambda}+\int_t^{s}\frac{\pi_u}{P_u}dP_u\nonumber\\
&=X_t-
\mathfrak{C}_t^{\lambda}+\int_t^s\pi_u\left(\mu^P_udu+\langle\sigma^P_u,d\mathcal{W}_u\rangle\right).
\end{align}

\begin{remark}
The integrability conditions on the trading strategies $\pi$ are
slightly different from those required in Definition 1 of Hu et al
\cite{Hu}. They assume
$E^{\mathbf{P}}[\int_0^{T}|\pi_t|^2dt]<\infty$ to guarantee the
no-arbitrage condition, and that the utility of the gain process
$-e^{-\gamma\int_0^{\cdot}\frac{\pi_s}{P_s}dP_s}$ is in Doob's class
$\mathcal{D}$ in order to apply the martingale optimality principle.
Our first integrability condition is nothing but the \emph{BMO}
martingale property of $\int_0^{\cdot}\pi_sd\mathcal{W}_s$. The
second condition is about the integrability of the utility of the
remaining gain process
$-e^{-\gamma\int_{\cdot}^{T}\frac{\pi_s}{P_s}dP_s}$.
Both of the integrability conditions are needed in order to derive
the quadratic BSDE representation for the utility indifference price
in the following Theorem \ref{Lemma for n nontraded assets}.
However, they are not restrictive if we only price and hedge
contingent claims with bounded payoff, as the corresponding optimal
trading strategy satisfies these conditions anyway, and coincides
with the optimal trading strategy obtained in \cite{Hu}.
\end{remark}

We recall a continuous martingale $M$ with
$E^{\mathbf{P}}{[M,M]_T}<\infty$ is called a $\mathbf{P}$-\emph{BMO}
martingale if
$$\sup_{\tau}\left\|E^{\mathbf{P}}[|M_T-M_{\tau}|^2|\mathcal{F}_{\tau}]\right\|_{\infty}<\infty$$
for any $\mathcal{F}_t$-stopping time $\tau\in[0,T]$. By Theorem 2.3
of Kazamaki \cite{Kazamaki}, if $M$ is a $\mathbf{P}$-\emph{BMO}
martingale, its Dol\'eans-Dade exponential $\mathcal{E}(M)$ is in
Doob's class $\mathcal{D}$, and therefore uniformly integrable.
Another useful property that will be used later is the following
version of the John-Nirenburg inequality
\begin{equation}\label{JN_inequality}
\sup_{\tau}\left\|E^{\mathbf{P}}[|M_T-M_{\tau}|^2|\mathcal{F}_{\tau}]\right\|_{\infty}
\leq
K_1\sup_{\tau}\left\|E^{\mathbf{P}}[|M_T-M_{\tau}||\mathcal{F}_{\tau}]\right\|^2_{\infty}
\end{equation}
for some constant $K_1>0$ (see Corollary 2.1 of \cite{Kazamaki}).

The investor will optimize over the admissible trading strategies to
choose an optimal ${\pi}^{*,\lambda}$ by maximizing her expected
terminal utility
\begin{equation}\label{optm1}
\esssup_{\pi\in\mathcal{A}_{ad}[t,T]}E^{\mathbf{P}}\left[\left.-e^{-\gamma\left(X_T^{X_t-
\mathfrak{C}_t^{\lambda}}(\pi)+\lambda  g({\cal{S}}_{\cdot})
\right)}\right|\mathcal{F}_t\right].
\end{equation}

To define the utility indifference price for the option, we also
need to consider the optimization problem for the investor without
the option. This involves the investor investing only in the
financial index itself. Her wealth equation is the same as
(\ref{X_equ}) but starts from $X_t$, and she will choose an optimal
$\pi^{*,0}$ by maximizing
\begin{equation}\label{optm2}
\esssup_{\pi\in\mathcal{A}_{ad}[t,T]}E^{\mathbf{P}}\left[\left.-e^{-\gamma
X_T^{X_t}(\pi)}\right|\mathcal{F}_t\right].
\end{equation}
We note that (\ref{optm2}) is a special case of (\ref{optm1}) with
$\lambda=0$.

\begin{definition}\label{definition1} (Utility indifference valuation and hedging)

The utility indifference price $\mathfrak{C}_t^{\lambda}$ of
$\lambda$ units of the derivative with payoff $g({\cal{S}}_{\cdot})$
is defined by the solution to
\begin{equation*}
\esssup_{\pi\in\mathcal{A}_{ad}[t,T]}E^{\mathbf{P}}\left[\left.-e^{-\gamma\left(X_T^{X_t-
\mathfrak{C}_t^{\lambda}}(\pi)+\lambda  g({\cal{S}}_{\cdot})
\right)}\right|\mathcal{F}_t\right]=
\esssup_{\pi\in\mathcal{A}_{ad}[t,T]}E^{\mathbf{P}}\left[\left.-e^{-\gamma
X_T^{X_t}(\pi)}\right|\mathcal{F}_t\right].
\end{equation*}
The hedging strategy for $\lambda$ units of the derivative is
defined by the difference in the two optimal trading strategies
$\pi^{*,\lambda}-\pi^{*,0}$.
\end{definition}


The main result of this section is to show that the price of the
option and the corresponding hedging strategy can be represented by
the solution of a quadratic BSDE.

\begin{theorem} (Nonlinear pricing rule)\label{Lemma for n nontraded assets}

Suppose that Assumptions (A1) (A2), and (A3) are satisfied. If
$(Y^{\lambda},\cal{Z}^{\lambda})$ is the unique solution of the
following quadratic BSDE
\begin{align}\label{BSDE for n nontraded assets}
Y_t^{\lambda}=&\ \left(\lambda
g({\cal{S}}_{\cdot})+\int_0^T\theta_sds\right)+\int_t^TF_s({\cal{Z}}^{\lambda}_s)ds
-\int_t^T\langle{\cal{Z}}^{\lambda}_s, d{\cal{W}}_s\rangle,
\end{align}
with $\theta_s=\frac{|\mu_s^P|^2}{2\gamma||\sigma_s^P||^2}$, and the
driver $F_s(z)$ given by
$$F_s(z)=-\frac{\gamma}{2}||z||^2+
\frac{\gamma}{2||\sigma_s^P||^2}\left|\langle\sigma_s^P,z\rangle
-\frac{\mu_s^P}{\gamma}\right|^2-\theta_s$$ for $z\in\mathbb{R}^d$,
then the utility indifference price $\mathfrak{C}_t^{\lambda}$ is
represented by the solution of the quadratic BSDE (\ref{BSDE for n
nontraded assets})
\begin{equation}\label{representationformula1}
\mathfrak{C}_t^{\lambda}=Y_t^{\lambda}-Y_t^{0},
\end{equation}
and the hedging strategy for $\lambda$ units of the option is given
by
$$-\frac{\langle\sigma_t^P,\mathcal{Z}_t^{\lambda}-\mathcal{Z}_t^0\rangle}{||\sigma_t^P||^2}.$$
\end{theorem}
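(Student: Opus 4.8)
The plan is to reformulate the two utility maximization problems (\ref{optm1}) and (\ref{optm2}) as a family of quadratic BSDEs indexed by the trading strategy $\pi$, and then to identify the optimized BSDE (\ref{BSDE for n nontraded assets}) by the comparison principle rather than by the martingale optimality principle. Existence and uniqueness of the bounded solution $(Y^{\lambda},\mathcal{Z}^{\lambda})$ with $\int_0^{\cdot}\langle\mathcal{Z}^{\lambda}_s,d\mathcal{W}_s\rangle$ a \emph{BMO} martingale is taken from the hypothesis and the cited quadratic-BSDE theory, since by (A1), (A2) and (A3) the terminal data $\lambda g(\mathcal{S}_{\cdot})+\int_0^T\theta_s\,ds$ is bounded and $F_s$ has quadratic growth in $z$.

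First I would fix an admissible $\pi\in\mathcal{A}_{ad}[t,T]$ and perform the risk-sensitive (logarithmic) transformation. Define $Y^{\lambda,\pi}$ by $e^{-\gamma Y_t^{\lambda,\pi}}=E^{\mathbf{P}}[e^{-\gamma(\int_t^T\frac{\pi_u}{P_u}dP_u+\lambda g(\mathcal{S}_{\cdot}))}|\mathcal{F}_t]$, so that, using the $\mathcal{F}_t$-measurability of $X_t-\mathfrak{C}_t^{\lambda}$ and the wealth dynamics (\ref{X_equ}), the conditional expected utility for strategy $\pi$ factorizes as $-e^{-\gamma(X_t-\mathfrak{C}_t^{\lambda}+Y_t^{\lambda,\pi})}$. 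The second admissibility bound $\epsilon\le E^{\mathbf{P}}[e^{-\gamma\int_t^T\frac{\pi_s}{P_s}dP_s}|\mathcal{F}_t]\le K$ together with the boundedness of $g$ from (A3) keeps $e^{-\gamma Y^{\lambda,\pi}}$ bounded away from $0$ and $\infty$, so that $Y^{\lambda,\pi}$ is a bounded process; applying martingale representation to the martingale $M_t=e^{-\gamma\int_0^t\frac{\pi_u}{P_u}dP_u}e^{-\gamma Y_t^{\lambda,\pi}}$ and then Itô's formula shows that $(Y^{\lambda,\pi},Z^{\lambda,\pi})$ solves the quadratic BSDE with terminal value $\lambda g(\mathcal{S}_{\cdot})$ and driver $f_s(\pi_s,z)=\pi_s\mu_s^P-\frac{\gamma}{2}\pi_s^2\|\sigma_s^P\|^2-\gamma\pi_s\langle\sigma_s^P,z\rangle-\frac{\gamma}{2}\|z\|^2$. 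Here the first (\emph{BMO}) admissibility condition, through the John--Nirenberg estimate (\ref{JN_inequality}) and Kazamaki's theorem, is exactly what turns the relevant local martingales into genuine martingales.

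Next I would maximize the driver pointwise over $\pi$. Since $\|\sigma_s^P\|\ge\epsilon>0$ by (A2), $f_s(\cdot,z)$ is a strictly concave quadratic whose maximizer is $\pi^*(s,z)=\frac{\mu_s^P-\gamma\langle\sigma_s^P,z\rangle}{\gamma\|\sigma_s^P\|^2}$ and whose maximal value is $\sup_{\pi}f_s(\pi,z)=-\frac{\gamma}{2}\|z\|^2+\frac{\gamma}{2\|\sigma_s^P\|^2}|\langle\sigma_s^P,z\rangle-\frac{\mu_s^P}{\gamma}|^2=F_s(z)+\theta_s$. This is precisely why (\ref{BSDE for n nontraded assets}) carries the constant $\int_0^T\theta_s\,ds$ in its terminal data and the offset $-\theta_s$ in $F_s$: setting $\tilde{Y}_t^{\lambda}=Y_t^{\lambda}-\int_0^t\theta_s\,ds$ (with $\tilde{Z}^{\lambda}=\mathcal{Z}^{\lambda}$) shows that $\tilde{Y}^{\lambda}$ solves the BSDE with terminal $\lambda g(\mathcal{S}_{\cdot})$ and driver $\sup_{\pi}f_s(\cdot)$, while leaving the difference $Y_t^{\lambda}-Y_t^0$ unchanged. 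The verification is then the comparison principle for quadratic BSDEs with bounded terminal data (\cite{Kobylanski},\cite{Morlais}): since $\sup_{\pi}f_s(\pi,z)\ge f_s(\pi_s,z)$ for every admissible $\pi$, comparison gives $\tilde{Y}_t^{\lambda}\ge Y_t^{\lambda,\pi}$, whereas the feedback control $\pi_s^{*,\lambda}=\pi^*(s,\mathcal{Z}_s^{\lambda})$ attains the supremum, so by uniqueness $\tilde{Y}_t^{\lambda}=Y_t^{\lambda,\pi^{*,\lambda}}$. Hence the value function of (\ref{optm1}) equals $-e^{-\gamma(X_t-\mathfrak{C}_t^{\lambda}+\tilde{Y}_t^{\lambda})}$, and the choice $\lambda=0$ in (\ref{optm2}) gives $-e^{-\gamma(X_t+\tilde{Y}_t^0)}$.

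Finally I would impose the indifference relation of Definition \ref{definition1}: equating the two value functions forces $-\mathfrak{C}_t^{\lambda}+\tilde{Y}_t^{\lambda}=\tilde{Y}_t^0$, that is $\mathfrak{C}_t^{\lambda}=\tilde{Y}_t^{\lambda}-\tilde{Y}_t^0=Y_t^{\lambda}-Y_t^0$, which is (\ref{representationformula1}); and the hedge $\pi^{*,\lambda}-\pi^{*,0}$ collapses, after the common term $\frac{\mu_s^P}{\gamma\|\sigma_s^P\|^2}$ cancels, to $-\langle\sigma_t^P,\mathcal{Z}_t^{\lambda}-\mathcal{Z}_t^0\rangle/\|\sigma_t^P\|^2$. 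I expect the main obstacle to be not the algebra but the control-theoretic admissibility bookkeeping: one must check that the candidate optimizer $\pi^{*,\lambda}$ genuinely lies in $\mathcal{A}_{ad}[t,T]$, which uses the \emph{BMO} property of $\int_0^{\cdot}\langle\mathcal{Z}_s^{\lambda},d\mathcal{W}_s\rangle$ (inherited from the boundedness of $Y^{\lambda}$ via the standard a priori estimate for quadratic BSDEs) to control both the square-integrability and the exponential-moment conditions defining $\mathcal{A}_{ad}$, and one must ensure each per-strategy $Y^{\lambda,\pi}$ is bounded so that the quadratic comparison theorem applies across the whole family.
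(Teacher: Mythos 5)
Your proposal follows essentially the same route as the paper: a risk-sensitive (logarithmic) transformation of the value function, a family of quadratic BSDEs indexed by $\pi$ obtained via martingale representation and It\^o's formula, pointwise maximization of the driver by completing the square to get $F_s(z)+\theta_s$ and the feedback optimizer, the comparison principle to identify the optimized BSDE, and verification that $\pi^{*,\lambda}$ is admissible via the \emph{BMO} property of $\int_0^{\cdot}\langle\mathcal{Z}^{\lambda}_s,d\mathcal{W}_s\rangle$; your $\theta$-normalization and your applying the martingale representation directly under $\mathbf{P}$ rather than first changing measure to $\mathbf{P}^{\pi}$ are immaterial variants. The one point to tighten is the comparison step: since $\pi$ is only \emph{BMO}-integrable, the per-strategy driver $f_s(\pi_s,z)$ has unbounded random coefficients, so the Kobylanski/Morlais comparison theorems you cite do not apply directly --- the paper instead invokes Theorem 8 of Mania and Schweizer \cite{MS} (comparison under a \emph{BMO} condition on the coefficients), which is exactly where the boundedness of each $Y^{\lambda}(\pi)$ that you flag is used.
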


\begin{remark}
It is known that the above type of quadratic BSDE (\ref{BSDE for n
nontraded assets}) can be derived by the martingale optimality
principle - see, for example, Theorem 7 of Hu et al \cite{Hu} and
Section 3 of Ankirchner et al \cite{Imkeller1} in a Brownian motion
setting, and Theorem 13 of Mania and Schweizer \cite{MS} and Section
2.1 of Morlais \cite{Morlais} in a general semimartingale setting.
In the following, we provide a new proof of Theorem \ref{Lemma for n
nontraded assets}, where the martingale optimality principle does
not play any role. Instead, we consider the problem from a
risk-sensitive control perspective, and use the comparison principle
for a family of quadratic BSDEs (\ref{QBSDEwithconrol2}) indexed by
the admissible trading strategy $\pi$ to derive the quadratic BSDE
representation. Although this technique is known in the literature
(see Sections 19-21 of Quenez \cite{Quenez} and Section 3 of El
Karoui et al \cite{ElKaroui19973}), we apply it for the first time
in the context of quadratic BSDEs with unbounded random
coefficients. On the other hand, treating utility indifference
valuation from a risk-sensitive control viewpoint may also lead to
new perspectives in utility maximization problems.
\end{remark}

\begin{proof} We consider the utility maximization
problem (\ref{optm1}). By using (\ref{X_equ}) in (\ref{optm1}), we
have
\begin{align*}
&\esssup_{\pi\in\mathcal{A}_{ad}[t,T]}E^{\mathbf{P}}
\left[\left.-e^{-\gamma\left(X_t-\mathfrak{C}_t^{\lambda}
+\int_t^T\frac{\pi_s}{P_s}dP_s +\lambda g({\cal{S}}_{\cdot})
\right)}\right|\mathcal{F}_t\right]\\[+0.1cm]
=&-e^{-\gamma(X_t-\mathfrak{C}_t^{\lambda})}\essinf_{\pi\in\mathcal{A}_{ad}[t,T]}
E^{\mathbf{P}}
\left[\left.e^{-\gamma\left(\int_t^T\frac{\pi_s}{P_s}dP_s +\lambda
g({\cal{S}}_{\cdot})
\right)}\right|\mathcal{F}_t\right]\\[+0.1cm]
=&-e^{-\gamma(X_t-\mathfrak{C}_t^{\lambda})}\essinf_{\pi\in\mathcal{A}_{ad}[t,T]}
E^{\mathbf{P}}
\left[\left.e^{-\gamma\left(\int_t^T\frac{\pi_s}{P_s}dP_s-\theta_sds\right)}
e^{-\gamma\left(\lambda
g({\cal{S}}_{\cdot})+\int_0^T\theta_sds\right)}
\right|\mathcal{F}_t\right]e^{\gamma\int_0^{t}\theta_sds}\\[+0.1cm]
=&-e^{-\gamma(X_t-\mathfrak{C}_t^{\lambda})}
\exp\left\{-\gamma\esssup_{\pi\in\mathcal{A}_{ad}[t,T]}Y_t^{\lambda}(\pi)\right\}e^{\gamma\int_0^{t}\theta_sds},
\end{align*}
where $Y_t^{\lambda}(\pi)$ denotes the risk-sensitive control
criterion
$$Y_t^{\lambda}(\pi)=-\frac{1}{\gamma}
\ln E^{\mathbf{P}} \left[\left.
e^{-\gamma\left(\int_t^T\pi_s\left(\mu^P_sds+\langle\sigma^P_s,d\mathcal{W}_s\rangle\right)-\theta_sds\right)}
e^{-\gamma\left(\lambda
g({\cal{S}}_{\cdot})+\int_0^T\theta_sds\right)}
\right|\mathcal{F}_t\right].$$ By Assumptions (A1)-(A3),
$\left|\lambda g(\cdot)+\int_0^{t}\theta_sds\right|\leq K_2$ for
some constant $K_2>1$. Moreover, by the conditions on the admissible
trading strategy $\pi$, we know that $Y_t^{\lambda}(\pi)$ is bounded
for $a.e.\ (t,\omega)\in[0,T]\times\Omega$:
$$-\frac{1}{\gamma}\ln K-K_2\leq Y_t(\pi)\leq-\frac{1}{\gamma}\ln \epsilon+K_2.$$

We further introduce the risk-sensitive control problem
$$\hat{Y}_t^{\lambda}=\esssup_{\pi\in\mathcal{A}_{ad}[t,T]}Y_t^{\lambda}(\pi).$$
In the following, we characterize both $Y^{\lambda}(\pi)$ and
$\hat{Y}^{\lambda}$ by the solutions of quadratic BSDEs.

First, we consider the risk-sensitive control criterion
$Y^{\lambda}(\pi)$ under a different probability measure. For any
given trading strategy $\pi\in\mathcal{A}_{ad}[0,T]$, we define a
$\mathbf{P}$-\emph{BMO} martingale
\begin{equation*}
N_t(\pi)=-\int_0^{t}\gamma\pi_s\langle\sigma^P_s,d\mathcal{W}_s\rangle,\
\ \ \text{for}\ t\in[0,T].
\end{equation*}
Indeed, for any $\mathcal{F}_t$-stopping time $\tau\in[0,T]$, by
using the conditions on the admissible trading strategy $\pi$ and
Assumption (A1) we have
\begin{align*}
\sup_{\tau}\left\|E^{\mathbf{P}}[|N_T(\pi)-N_{\tau}(\pi)|^2|\mathcal{F}_{\tau}]\right\|_{\infty}
=&\ \sup_{\tau}\left\|E^{\mathbf{P}}\left[\left.\int_{\tau}^T\gamma^2||\sigma_s^P||^2|\pi_s|^2ds\right|\mathcal{F}_{\tau}\right]\right\|_{\infty}\\
\leq&\
K\sup_{\tau}\left\|E^{\mathbf{P}}\left[\left.\int_{\tau}^T|\pi_s|^2ds\right|\mathcal{F}_{\tau}\right]\right\|_{\infty}<\infty
\end{align*}
for some constant $K>0$.  Hence, the Dol\'eans-Dade exponential
$\mathcal{E}(N(\pi))$ is uniformly integrable, and we change the
probability measure by defining
$$\frac{d\mathbf{P}^{\pi}}{d\mathbf{P}}=\mathcal{E}(N(\pi))=
\mathcal{E}(-\int_0^{\cdot}\gamma\pi_s\langle\sigma^P_s,d\mathcal{W}_s\rangle).$$
The risk-sensitive control criterion under $\mathbf{P}^{\pi}$
becomes
\begin{align*}
Y_t^{\lambda}(\pi)&= -\frac{1}{\gamma}\ln E^{\mathbf{P}^{\pi}}
\left[\left.e^{-\int_t^T(\gamma\mu_s^P\pi_s-\frac{1}{2}\gamma^2||\sigma_s^P||^2|\pi_s|^2-\gamma\theta_s)ds}
e^{-\gamma\left(\lambda g(\mathcal{S}_{\cdot})+\int_0^{T}\theta_sds
\right)}\right|\mathcal{F}_t\right]\\
&= -\frac{1}{\gamma}\ln E^{\mathbf{P}^{\pi}}
\left[\left.e^{-\gamma\int_t^TF_s(\pi)ds} e^{-\gamma\left(\lambda
g(\mathcal{S}_{\cdot})+\int_0^{T}\theta_sds
\right)}\right|\mathcal{F}_t\right],
\end{align*}
where we denote
$$F_s(\pi)=\mu_s^P\pi_s-\frac{\gamma}{2}||\sigma_s^P||^2|\pi_s|^2-\theta_s.$$
Note that $F_s(\pi)$ only depends on $\pi_s$, not on all of $\pi$.


Next, we characterize $Y^{\lambda}(\pi)$ by the solution of a
quadratic BSDE with unbounded random coefficients, whose existence
is proved by directly showing that $Y^{\lambda}(\pi)$ indeed
satisfies this BSDE. Indeed, note that for $t\in[0,T]$,
$$\bar{Y}_t^{\lambda}(\pi)=e^{-\gamma\left(Y_t^{\lambda}(\pi)+\int_0^{t}F_s(\pi)ds\right)}$$
is a uniformly integrable martingale under $\mathbf{P}^{\pi}$, since
$$e^{-\gamma\left(Y_t^{\lambda}(\pi)+\int_0^{t}F_s(\pi)ds\right)}=
E^{\mathbf{P}^{\pi}} \left[\left.e^{-\gamma\int_0^TF_s(\pi)ds}
e^{-\gamma\left(\lambda g(\mathcal{S}_{\cdot})+\int_0^{T}\theta_sds
\right)}\right|\mathcal{F}_t\right].
$$
By the martingale representation theorem, there exists an
$\mathcal{F}_t$-predictable process
$\bar{\mathcal{Z}}^{\lambda}(\pi)$ such that
\begin{equation}\label{linearBSDE}
\bar{Y}_t^{\lambda}(\pi) =e^{-\gamma\left(\lambda
g(\mathcal{S}_{\cdot})+\int_0^{T}(\theta_s+F_s(\pi))ds\right)}-
\int_t^{T}\langle\bar{\mathcal{Z}}_s^{\lambda}(\pi),d\mathcal{W}_s(\pi)\rangle,
\end{equation}
where $\mathcal{W}(\pi)=\mathcal{W}-[\mathcal{W},N(\pi)]$ is
Brownian motion under $\mathbf{P}^{\pi}$ by Girsanov's
transformation.

For any $t\in[0,T]$, if we define
$\mathcal{Z}_t^{\lambda}(\pi)=-\frac{1}{\gamma}\bar{\mathcal{Z}}^{\lambda}_t(\pi)/\bar{Y}_t^{\lambda}(\pi)$,
and apply It\^o's formula to
$Y_t^{\lambda}(\pi)=-\frac{1}{\gamma}\ln\bar{Y}_t^{\lambda}(\pi)-\int_0^{t}F_s(\pi)ds$,
then it is easy to verify that $(Y^{\lambda}(\pi),Z^{\lambda}(\pi))$
is a solution to the following quadratic BSDE
\begin{align}\label{QBSDEwithconrol}
Y_t^{\lambda}(\pi)=&\ \left(\lambda
g(\mathcal{S}_{\cdot})+\int_0^T\theta_sds\right)\nonumber\\
&\
+\int_t^{T}\left(F_s(\pi)-\frac{\gamma}{2}||\mathcal{Z}_s^{\lambda}(\pi)||^2\right)ds
-\int_t^{T}\langle\mathcal{Z}_s^{\lambda}(\pi),d\mathcal{W}_s(\pi)\rangle.
\end{align}
Equivalently under the original probability measure $\mathbf{P}$, we
write
\begin{align}\label{QBSDEwithconrol2}
Y_t^{\lambda}(\pi)=&\ \left(\lambda
g(\mathcal{S}_{\cdot})+\int_0^T\theta_sds\right)\nonumber\\
&\
+\int_t^{T}\left(F_s(\pi)-\gamma\langle\sigma_s^P,\mathcal{Z}_s^{\lambda}(\pi)\rangle\pi_s
-\frac{\gamma}{2}||\mathcal{Z}_s^{\lambda}(\pi)||^2\right)ds-
\int_t^{T}\langle\mathcal{Z}^{\lambda}_s(\pi),d\mathcal{W}_s\rangle.
\end{align}

We notice that BSDE (\ref{QBSDEwithconrol2}) has quadratic growth in
$z$ with unbounded random coefficients, which satisfy the \emph{BMO}
condition in Theorem 8 of Mania and Schweizer \cite{MS}. Since the
solution $Y^{\lambda}(\pi)$ is bounded, Theorem 8 of \cite{MS} then
implies that a comparison theorem holds for
(\ref{QBSDEwithconrol2}). Let $\pi^1,\pi^2\in\mathcal{A}_{ad}[0,T]$
such that
$$F_s(\pi^1)-\gamma\langle\sigma_s^P,z\rangle\pi_s^1
\geq F_s(\pi^2)-\gamma\langle\sigma_s^P,z\rangle\pi_s^2$$ for
$z\in\mathbb{R}^d$. Then $Y_t^{\lambda}(\pi^1)\geq
Y_t^{\lambda}(\pi^2)$ for $a.e.$ $(t,\omega)$. This can be proved
either by changing probability measure as in \cite{MS}, or by an
exponential change of variables.

As a byproduct, we also obtain that the quadratic BSDE
(\ref{QBSDEwithconrol2}) admits a unique solution
$(Y^{\lambda}(\pi),\mathcal{Z}^{\lambda}(\pi))$, where
$Y^{\lambda}(\pi)$ is a bounded special semimartingale with
$\mathcal{Z}^{\lambda}(\pi)$ as its corresponding martingale
representation.

Thirdly, we prove that the solution of our risk-sensitive control
problem is given by
\begin{equation}\label{first_claim}
\hat{Y}_t^{\lambda}=Y_t^{\lambda},
\end{equation}
and the optimal trading strategy is given by
\begin{equation}\label{second_claim}
\pi_s^{*,\lambda}=-\frac{\langle\sigma_s^P,\mathcal{Z}^{\lambda}_s\rangle}{||\sigma_s^P||^2}+\frac{\mu_s^P}{\gamma||\sigma_s^P||^2}
\end{equation} for $s\in[t,T]$, where
$(Y^{\lambda},\mathcal{Z}^{\lambda})$ solves the quadratic BSDE
(2.5), whose existence and uniqueness is guaranteed by Theorems 2.3
and 2.6 of Kobylanski [28] or Theorem 1 of Tevzadze [36]. Indeed,
the driver of (2.5) satisfies $F_t(0)=0$, and is smooth in $z$ with
\begin{align*}
\nabla_{z}F_t(z)&=-\gamma
z+\frac{\gamma}{||\sigma_t^{P}||^2}\left(\langle\sigma_t^P,z\rangle
-\frac{\mu_t^P}{\gamma}\right)\sigma_t^{P},\\
\nabla_{zz}F_t(z)&=-\gamma\mathbf{1}+\frac{\gamma}{||\sigma_t^{P}||^2}(\sigma_t^{P})^{T}\sigma_t^P,
\end{align*}
where the superscript $T$ denotes the matrix transposition. Hence,
by Assumptions (A1)-(A3),
\begin{align}\label{const0}
\begin{split}
||\nabla_{z}F_t(z)||\leq K_2(1+||z||),\\
||z||^2/K_2\leq z\nabla_{zz}F_t(z)z^T\leq K_2||z||^2,
\end{split}
\end{align}
and the terminal data satisfies
\begin{equation}\label{const1}
\left|\lambda g(\cdot)+\int_0^{T}\theta_sds\right|\leq K_2
\end{equation}
for some constant $K_2\geq 1$. Therefore, there exists a unique
solution $(Y^{\lambda},\mathcal{Z}^{\lambda})$ to BSDE (2.5), where
$Y^{\lambda}$ is a bounded special semimartingale with
$\mathcal{Z}^{\lambda}$ as its martingale representation. Moreover,
the martingale part
$\int_0^{\cdot}\langle\mathcal{Z}^{\lambda},d\mathcal{W}\rangle$ is
a $\mathbf{P}$-\emph{BMO} martingale.

Now we proceed to prove (\ref{first_claim}) and
(\ref{second_claim}). Notice that for any
$\pi\in\mathcal{A}_{ad}[t,T]$,
\begin{equation*}
F_s(\pi)-\gamma\langle\sigma_s^p,\mathcal{Z}^{\lambda}_s\rangle\pi_s
-\frac{\gamma}{2}||\mathcal{Z}^{\lambda}_s||^2=-\frac{\gamma}{2}||\sigma_s^P||^2|\pi_s-\pi_s^{*,\lambda}|^2
+F_s(\mathcal{Z}_s^{\lambda})\leq F_s(\mathcal{Z}_s^{\lambda}),
\end{equation*}
and for $\pi=\pi^{*,\lambda}$,
\begin{equation*}
F_s(\pi^{*,\lambda})-\gamma\langle\sigma_s^p,\mathcal{Z}^{\lambda}_s\rangle\pi_s^{*,\lambda}
-\frac{\gamma}{2}||\mathcal{Z}^{\lambda}_s||^2=
F_s(\mathcal{Z}_s^{\lambda}).
\end{equation*}
If $\pi^{*,\lambda}\in\mathcal{A}_{ad}[t,T]$, then by applying the
comparison theorem for the quadratic BSDE (2.7), we obtain that
$Y_t^{\lambda}(\pi)\leq Y_t^{\lambda}$ for any
$\pi\in\mathcal{A}_{ad}[t,T]$, and
$Y_t^{\lambda}(\pi^{*,\lambda})=Y_t^{\lambda}$.

Since
$\hat{Y}^{\lambda}_t=\sup_{\pi\in\mathcal{A}_{ad}[t,T]}Y^{\lambda}_t(\pi)$,
we get $\hat{Y}^{\lambda}_t= Y_t^{\lambda}$ and $\pi^{*,\lambda}$
achieves the maximum. We are left to verify that
$\pi^{*,\lambda}\in\mathcal{A}_{ad}[t,T]$. For this, we only need to
note that
$\int_0^{\cdot}\langle\mathcal{Z}^{\lambda}_s,d\mathcal{W}_s\rangle$
is a $\mathbf{P}$-\emph{BMO} martingale, and
$$Y_t^{\lambda}=Y_t^{\lambda}(\pi^{*,\lambda})= -\frac{1}{\gamma}
\ln E^{\mathbf{P}} \left[\left.
e^{-\gamma\int_t^T\frac{\pi^{*,\lambda}_s}{P_s}dP_s}
e^{-\gamma\lambda g({\cal{S}}_{\cdot})}
\right|\mathcal{F}_t\right]$$ is bounded for $a.e.$
$(t,\omega)\in[0,T]\times\Omega$, so is
$E^{\mathbf{P}}[e^{-\gamma\int_{t}^{T}\frac{\pi_s^{*,\lambda}}{P_s}dP_s}|\mathcal{F}_{t}]$.

The optimization problem (\ref{optm2}) is a special case of
(\ref{optm1}) with $\lambda=0$. Finally, by Definition
\ref{definition1}, the price $\mathfrak{C}_t^{\lambda}$ is given by
the solution to
\begin{align*}
-e^{-\gamma\left(X_t-
\mathfrak{C}_t^{\lambda}+{Y}_t^{\lambda}\right)}e^{\gamma\int_0^t\theta_sds}&
=E^{\mathbf{P}}\left[\left.-e^{-\gamma\left(X_T^{X_t-\mathfrak{C}_t^{\lambda}}
({\pi}^{*,\lambda})+\lambda
g({\cal{S}}_{\cdot})\right)}\right|\mathcal{F}_t\right]\\[+0.1cm]
&=E^{\mathbf{P}}\left[\left.-e^{-\gamma
X_T^{X_t}(\pi^{*,0})}\right|\mathcal{F}_t\right]=-e^{-\gamma\left(X_t+{Y}_t^{0}\right)}e^{\gamma\int_0^t\theta_sds}.
\end{align*}
Therefore, $\mathfrak{C}_t^{\lambda}=Y_t^{\lambda}-Y_t^{0}$, and the
hedging strategy for $\lambda$ units of the option is given by
$${\pi}^{*,\lambda}_t-\pi^{*,0}_t=
-\frac{\langle\sigma_t^P,\mathcal{Z}^{\lambda}_t\rangle}{||\sigma_t^P||^2}+\frac{\mu_t^P}{\gamma||\sigma_t^P||^2}+\frac{\langle\sigma_t^P,\mathcal{Z}^{0}_t\rangle}{||\sigma_t^P||^2}-\frac{\mu_t^P}{\gamma||\sigma_t^P||^2}
=-\frac{\langle\sigma_t^P,\mathcal{Z}_t^{\lambda}-\mathcal{Z}_t^{0}\rangle}{||\sigma_t^P||^2},$$
which completes the proof.
\end{proof}

\section{Functional Differential Equations and Pseudo Linear Pricing Rule}\label{sec-pseudo}

In this section, we present our pseudo linear pricing rule for the
utility indifference price $\mathfrak{C}^{\lambda}_t$. The main idea
is motivated by Liang et al \cite{LLQ2009}, where the authors
introduce a class of functional differential equations in order to
solve BSDEs on a general filtered probability space. See also
Casserini and Liang \cite{Casserini} for a generalization of this
method to solve FBSDEs. The solution $Y$ to BSDE (\ref{BSDE for n
nontraded assets}) is a bounded special semimartingale, so admits a
unique decomposition under $\mathbf{P}$:
$$Y_t^{\lambda}=M_t^{\lambda,\mathbf{P}}-V_t^{\lambda,\mathbf{P}},\ \ \ \text{for}\ t\in[0,T],$$
where $M^{\lambda,\mathbf{P}}$ is the martingale part, which is a
$\mathbf{P}$-\emph{BMO} martingale, and $V^{\lambda,\mathbf{P}}$ is
the finite variation part with $V^{\lambda,\mathbf{P}}_0=0$. By the
martingale property of $M^{\lambda,\mathbf{P}}$,
\begin{align}\label{algebra}
Y_t^{\lambda}
&=E^{\mathbf{P}}\left[\left.Y_T^{\lambda}+V_T^{\lambda,\mathbf{P}}\right|\mathcal{F}_t\right]-V_t^{\lambda,\mathbf{P}}\nonumber\\[+0.1cm]
&=E^{\mathbf{P}}\left[\left.\lambda
g(\mathcal{S}_{\cdot})+\int_0^T\theta_sds\right|\mathcal{F}_t\right]
+E^{\mathbf{P}}[V_T^{\lambda,\mathbf{P}}-V_t^{\lambda,\mathbf{P}}|\mathcal{F}_t].
\end{align}
In other words, knowing the finite variation process
$V^{\lambda,\mathbf{P}}$ and the terminal data $Y_T^{\lambda}$ is
enough to calculate $Y^{\lambda}$, which in turn gives us a new
pricing rule for the utility indifference price
$\mathfrak{C}^{\lambda}_t$.

\begin{theorem}\label{lemmapesudo} (Pseudo linear pricing rule)

Suppose that Assumptions (A1) (A2), and (A3) are satisfied. If
$V^{\lambda,\mathbf{P}}$ is the unique solution of the following
functional differential equation
\begin{equation}\label{FDE}
V_t^{\lambda,\mathbf{P}}=\int_0^tF_s(\mathcal{Z}^{\lambda,\mathbf{P}}_s(V^{\lambda,\mathbf{P}}))ds,
\end{equation}
with $\mathcal{Z}^{\lambda,\mathbf{P}}(\cdot)$, as an affine
functional of a stochastic process $V$, given by
\begin{align*}
\int_t^{T}\langle\mathcal{Z}^{\lambda,\mathbf{P}}_s(V),d\mathcal{W}_s\rangle
=&\ \left(\lambda
g(\mathcal{S}_{\cdot})+\int_0^T\theta_sds\right)+V_{T}\\
&\ - E^{\mathbf{P}}\left[\left.\left(\lambda
g(\mathcal{S}_{\cdot})+\int_0^T\theta_sds\right)+V_{T}\right|\mathcal{F}_t\right],
\end{align*}
then the utility indifference price $\mathfrak{C}_t^{\lambda}$ can
be represented by the following linear conditional expectation
\begin{equation}\label{Linearprcingrule}
\mathfrak{C}_t^{\lambda}=E^{\mathbf{P}}\left[\left.\lambda
g(\mathcal{S}_{\cdot})\right|\mathcal{F}_t\right]+
E^{\mathbf{P}}[V_T^{\lambda,\mathbf{P}}-V_t^{\lambda,\mathbf{P}}|\mathcal{F}_t]-
E^{\mathbf{P}}[V_T^{0,\mathbf{P}}-V_t^{0,\mathbf{P}}|\mathcal{F}_t],
\end{equation}
and the hedging strategy for $\lambda$ units of the option is given
by
$$-\frac{\langle\sigma_t^P,\mathcal{Z}_t^{\lambda,\mathbf{P}}(V^{\lambda,\mathbf{P}})
-\mathcal{Z}_t^{0,\mathbf{P}}(V^{0,\mathbf{P}})\rangle}{||\sigma_t^P||^2}.$$
\end{theorem}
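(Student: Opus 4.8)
The plan is to establish a one-to-one correspondence between solutions of the functional differential equation (\ref{FDE}) and solutions of the quadratic BSDE (\ref{BSDE for n nontraded assets}), and then to read off the pricing and hedging formulas from this correspondence together with the decomposition (\ref{algebra}) already obtained. Throughout I write $\xi^{\lambda}=\lambda g(\mathcal{S}_{\cdot})+\int_0^T\theta_s\,ds$ for the terminal data.

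First I would treat the passage from the BSDE to the FDE. Starting from the unique solution $(Y^{\lambda},\mathcal{Z}^{\lambda})$ of (\ref{BSDE for n nontraded assets}) guaranteed by Theorem \ref{Lemma for n nontraded assets}, the BSDE dynamics identify the canonical decomposition $Y^{\lambda}=M^{\lambda,\mathbf{P}}-V^{\lambda,\mathbf{P}}$ explicitly as $dV_t^{\lambda,\mathbf{P}}=F_t(\mathcal{Z}_t^{\lambda})\,dt$ and $dM_t^{\lambda,\mathbf{P}}=\langle\mathcal{Z}_t^{\lambda},d\mathcal{W}_t\rangle$, with $M^{\lambda,\mathbf{P}}$ a $\mathbf{P}$-\emph{BMO} martingale. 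Using the martingale property $M_t^{\lambda,\mathbf{P}}=E^{\mathbf{P}}[\xi^{\lambda}+V_T^{\lambda,\mathbf{P}}|\mathcal{F}_t]$ and subtracting it from $\xi^{\lambda}+V_T^{\lambda,\mathbf{P}}$, I would verify that $\mathcal{Z}^{\lambda}$ coincides with the affine functional $\mathcal{Z}^{\lambda,\mathbf{P}}(V^{\lambda,\mathbf{P}})$ defined in the statement. Substituting this identity into $dV_t^{\lambda,\mathbf{P}}=F_t(\mathcal{Z}_t^{\lambda})\,dt$ shows that $V^{\lambda,\mathbf{P}}$ solves (\ref{FDE}), which delivers existence.

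For the converse and for uniqueness, I would take any solution $V$ of (\ref{FDE}), set $\mathcal{Z}=\mathcal{Z}^{\lambda,\mathbf{P}}(V)$, and define $Y_t=E^{\mathbf{P}}[\xi^{\lambda}+V_T|\mathcal{F}_t]-V_t$. By construction the martingale $E^{\mathbf{P}}[\xi^{\lambda}+V_T|\mathcal{F}_t]$ has integrand $\mathcal{Z}$, so a direct check gives $Y_T=\xi^{\lambda}$ and $dY_t=\langle\mathcal{Z}_t,d\mathcal{W}_t\rangle-F_t(\mathcal{Z}_t)\,dt$; hence $(Y,\mathcal{Z})$ solves (\ref{BSDE for n nontraded assets}). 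The uniqueness part of Theorem \ref{Lemma for n nontraded assets} then forces $(Y,\mathcal{Z})=(Y^{\lambda},\mathcal{Z}^{\lambda})$ and therefore $V=V^{\lambda,\mathbf{P}}$, establishing that (\ref{FDE}) has a unique solution. The pricing formula (\ref{Linearprcingrule}) now follows by applying the decomposition (\ref{algebra}) to the indices $\lambda$ and $0$, inserting the two expressions into $\mathfrak{C}_t^{\lambda}=Y_t^{\lambda}-Y_t^{0}$ from (\ref{representationformula1}), and noting that the two $E^{\mathbf{P}}[\int_0^T\theta_s\,ds|\mathcal{F}_t]$ contributions cancel; the hedging formula follows by substituting $\mathcal{Z}^{\lambda}=\mathcal{Z}^{\lambda,\mathbf{P}}(V^{\lambda,\mathbf{P}})$ and $\mathcal{Z}^{0}=\mathcal{Z}^{0,\mathbf{P}}(V^{0,\mathbf{P}})$ into the hedging expression of Theorem \ref{Lemma for n nontraded assets}.

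The step I expect to be the main obstacle is the integrability bookkeeping underlying the converse direction: to invoke the uniqueness theorem for the quadratic BSDE I must guarantee that the $Y$ built from an arbitrary FDE solution is a bounded special semimartingale whose martingale part $\int_0^{\cdot}\langle\mathcal{Z}_s,d\mathcal{W}_s\rangle$ is a genuine $\mathbf{P}$-\emph{BMO} martingale. This amounts to controlling $V_T$ and the quadratic functional $F_s(\mathcal{Z}^{\lambda,\mathbf{P}}(V))$, and is precisely where the boundedness of $g$ in Assumption (A3) and the ellipticity in Assumption (A2) enter, ensuring that the affine map $V\mapsto\mathcal{Z}^{\lambda,\mathbf{P}}(V)$ lands in the \emph{BMO} class on which Theorem \ref{Lemma for n nontraded assets} operates.
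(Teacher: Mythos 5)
Your proposal is correct and follows essentially the same route as the paper: both identify the finite variation part of the special semimartingale decomposition of $Y^{\lambda}$ with $\int_0^{\cdot}F_s(\mathcal{Z}^{\lambda}_s)\,ds$, note that $\mathcal{Z}^{\lambda}$ is the martingale representation of $M^{\lambda,\mathbf{P}}$ so that $\mathcal{Z}^{\lambda}=\mathcal{Z}^{\lambda,\mathbf{P}}(V^{\lambda,\mathbf{P}})$, and then read off (\ref{Linearprcingrule}) from (\ref{representationformula1}) and (\ref{algebra}). Your converse direction (an FDE solution yields a BSDE solution, so uniqueness of $V^{\lambda,\mathbf{P}}$ follows from uniqueness for (\ref{BSDE for n nontraded assets}), subject to the integrability bookkeeping you flag) is a sound addition that the paper does not carry out, since it phrases uniqueness of the FDE solution as a hypothesis of the theorem.
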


\begin{proof} To obtain the functional differential equation
(\ref{FDE}), we take conditional expectation of (\ref{BSDE for n
nontraded assets}) on $\mathcal{F}_t$:
\begin{align*}
Y_t^{\lambda}&=E^{\mathbf{P}}\left[\left.\left(\lambda
g({\cal{S}}_{\cdot})+\int_0^T\theta_sds\right)+\int_t^TF_s({\cal{Z}}^{\lambda}_s)ds\right|\mathcal{F}_t\right]\\
&=E^{\mathbf{P}}\left[\left.\left(\lambda
g({\cal{S}}_{\cdot})+\int_0^T\theta_sds\right)+\int_0^TF_s({\cal{Z}}^{\lambda}_s)ds\right|\mathcal{F}_t\right]
-\int_0^tF_s({\cal{Z}}^{\lambda}_s)ds.
\end{align*}
On the other hand, $Y^{\lambda}$ admits the decomposition
$Y_t^{\lambda}=M_t^{\lambda,\mathbf{P}}-V_{t}^{\lambda,\mathbf{P}}$.
Due to the uniqueness of special semimartingale decomposition, we
obtain (\ref{FDE}) by identifying the finite variation parts of the
above two expressions for $Y^{\lambda}$. To show that
$\mathcal{Z}_t^{\lambda}=\mathcal{Z}_t^{\lambda,\mathbf{P}}(V^{\lambda,\mathbf{P}})$,
we only need to note that $Z^{\lambda}$ is the martingale
representation of $M^{\lambda,\mathbf{P}}$. Finally,
(\ref{Linearprcingrule}) follows from (\ref{representationformula1})
and (\ref{algebra}).
\end{proof}

Since (\ref{Linearprcingrule}) is under linear expectation, we call
it the \emph{pseudo linear pricing rule} for the utility
indifference price $\mathfrak{C}_t^{\lambda}$. The advantage of this
pricing rule compared to the nonlinear pricing rule
(\ref{representationformula1}) is that we only need to solve
functional differential equation (\ref{FDE}) for
$V^{\lambda,\mathbf{P}}$ in order to calculate the utility
indifference price $\mathfrak{C}_t^{\lambda}$. Moreover, functional
differential equation (\ref{FDE}) runs forwards. Thus, the
conflicting nature between the backward equation and the underlying
forward equation is avoided.

Whereas in \cite{LLQ2009} the driver is Lipschitz continuous, the
driver $F_t(z)$ of functional differential equation (\ref{FDE}) is
quadratic in $z$. Nevertheless, we can employ Picard iteration to
approximate $V^{\lambda,\mathbf{P}}$, and therefore the utility
indifference price $\mathfrak{C}_t^{\lambda}$. We rely mainly on the
change of probability measure, and we will present two linear
approximations for $\mathfrak{C}_t^{\lambda}$ depending on different
choices of probability measures. We first present an equivalent
formulation of pseudo linear pricing rule but under a different
probability measure.

\begin{corollary} \label{changemeasurecorollary}
Let ${\cal{B}}=(B^1,\cdots,B^d)$ be a $d$-dimensional Brownian
motion on a filtered probability space
$(\Omega,\mathcal{F},\{\mathcal{F}_t\},\mathbf{Q})$.  Suppose that
$N$ is some $\mathbf{Q}$-\text{BMO} martingale, so that its
Dol\'eans-Dade exponential $\mathcal{E}(N)$ is uniformly integrable.
Define
$$\frac{d\mathbf{P}}{d\mathbf{Q}}=\mathcal{E}(N).$$

Suppose that $V^{\lambda,\mathbf{Q}}$ solves the following
functional differential equation
\begin{equation}\label{FDE_change_of_measure}
V_t^{\lambda,\mathbf{Q}}=\int_0^tF_s(\mathcal{Z}_s^{\lambda,\mathbf{Q}}(V^{\lambda,\mathbf{Q}}))ds+
\langle\mathcal{Z}_s^{\lambda,\mathbf{Q}}(V^{\lambda,\mathbf{Q}}),d[\mathcal{B},N]_s\rangle,
\end{equation}
with $\mathcal{Z}^{\lambda,\mathbf{Q}}(\cdot)$, as an affine
functional of $V$, given by
\begin{align*}
\int_t^{T}\langle\mathcal{Z}_s^{\lambda,\mathbf{Q}}(V),d\mathcal{B}_s\rangle
=&\ \left(\lambda
g(\mathcal{S}_{\cdot})+\int_0^T\theta_sds\right)+V_{T}\\
&\ -E^{\mathbf{Q}}\left[\left.\left(\lambda
g(\mathcal{S}_{\cdot})+\int_0^T\theta_sds\right)+V_{T}\right|\mathcal{F}_t\right],
\end{align*}
and $\mathcal{S}=(S^1,\cdots,S^d)$ given by
\begin{equation}\label{dynamics of nontraded assets_change_meaure}
S_t^i=S_0^i+\int_0^tS_s^i(\mu^i_sds- \langle
\sigma^i_s,d[\mathcal{B},N]_s\rangle+ \langle
\sigma^i_s,d\mathcal{B}_s\rangle).
\end{equation}
Then
\begin{equation}\label{formulaforVP}
V_t^{\lambda,\mathbf{P}}=V_t^{\lambda,\mathbf{Q}}-
\int_0^t\langle\mathcal{Z}_s^{\lambda,\mathbf{Q}}(V^{\lambda,\mathbf{Q}}),d[\mathcal{B},N]_s\rangle
\end{equation}
solves (\ref{FDE}) on the filtered probability space
$(\Omega,\mathcal{F},\{\mathcal{F}_t\},\mathbf{P})$.
\end{corollary}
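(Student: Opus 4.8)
The plan is to verify the identity directly through Girsanov's transformation, treating the corollary as a change-of-measure correspondence between two functional differential equations rather than re-establishing existence or uniqueness. First I would apply Girsanov: since $d\mathbf{P}/d\mathbf{Q}=\mathcal{E}(N)$ with $N$ a $\mathbf{Q}$-\emph{BMO} martingale, each component of the $\mathbf{Q}$-Brownian motion $\mathcal{B}$ becomes, under $\mathbf{P}$, a continuous local martingale after subtracting its bracket with $N$; because $[\mathcal{B},N]$ has finite variation the quadratic variation is unchanged, so by Lévy's characterization $\mathcal{W}:=\mathcal{B}-[\mathcal{B},N]$ is a $d$-dimensional $\mathbf{P}$-Brownian motion and $d\mathcal{B}_s=d\mathcal{W}_s+d[\mathcal{B},N]_s$. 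Substituting this into (\ref{dynamics of nontraded assets_change_meaure}), the two occurrences of $\langle\sigma_s^i,d[\mathcal{B},N]_s\rangle$ cancel, and the dynamics of $\mathcal{S}$ collapse to the original (\ref{dynamics of nontraded assets }) driven by $\mathcal{W}$ under $\mathbf{P}$. Hence the payoff $g(\mathcal{S}_{\cdot})$, and the terminal datum $\lambda g(\mathcal{S}_{\cdot})+\int_0^T\theta_s\,ds$, are literally the same random variables in the two formulations.

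The heart of the argument is to show that the two integrands agree, i.e. $\mathcal{Z}^{\lambda,\mathbf{Q}}_s(V^{\lambda,\mathbf{Q}})=\mathcal{Z}^{\lambda,\mathbf{P}}_s(V^{\lambda,\mathbf{P}})$, where $V^{\lambda,\mathbf{P}}$ is given by (\ref{formulaforVP}). Abbreviate $\mathcal{Z}^{\mathbf{Q}}:=\mathcal{Z}^{\lambda,\mathbf{Q}}(V^{\lambda,\mathbf{Q}})$ and $\xi:=\lambda g(\mathcal{S}_{\cdot})+\int_0^T\theta_s\,ds+V_T^{\lambda,\mathbf{Q}}$. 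By the defining relation for $\mathcal{Z}^{\lambda,\mathbf{Q}}$, the $\mathbf{Q}$-\emph{BMO} martingale $E^{\mathbf{Q}}[\xi|\mathcal{F}_t]$ has representation $E^{\mathbf{Q}}[\xi|\mathcal{F}_0]+\int_0^t\langle\mathcal{Z}^{\mathbf{Q}}_s,d\mathcal{B}_s\rangle$. Rewriting this with $d\mathcal{B}_s=d\mathcal{W}_s+d[\mathcal{B},N]_s$ and transferring the finite-variation part to the left shows that
\[
\tilde{M}_t:=E^{\mathbf{Q}}[\xi|\mathcal{F}_t]-\int_0^t\langle\mathcal{Z}^{\mathbf{Q}}_s,d[\mathcal{B},N]_s\rangle=E^{\mathbf{Q}}[\xi|\mathcal{F}_0]+\int_0^t\langle\mathcal{Z}^{\mathbf{Q}}_s,d\mathcal{W}_s\rangle
\]
is a continuous $\mathbf{P}$-local martingale. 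Since $E^{\mathbf{Q}}[\xi|\mathcal{F}_t]$ is $\mathbf{Q}$-\emph{BMO}, the stability of \emph{BMO} martingales under the change of measure $\mathcal{E}(N)$ (Kazamaki \cite{Kazamaki}) makes $\tilde{M}$ a genuine $\mathbf{P}$-\emph{BMO} martingale, hence uniformly integrable. Evaluating at $T$ and invoking the definition (\ref{formulaforVP}), $\tilde{M}_T=\xi-\int_0^T\langle\mathcal{Z}^{\mathbf{Q}}_s,d[\mathcal{B},N]_s\rangle=\lambda g(\mathcal{S}_{\cdot})+\int_0^T\theta_s\,ds+V_T^{\lambda,\mathbf{P}}$, so $\tilde{M}_t=E^{\mathbf{P}}[\lambda g(\mathcal{S}_{\cdot})+\int_0^T\theta_s\,ds+V_T^{\lambda,\mathbf{P}}|\mathcal{F}_t]$. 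By the uniqueness of the martingale representation under $\mathbf{P}$ with respect to $\mathcal{W}$, the integrand of $\tilde{M}$ is precisely $\mathcal{Z}^{\lambda,\mathbf{P}}(V^{\lambda,\mathbf{P}})$, and it equals $\mathcal{Z}^{\mathbf{Q}}$ as desired.

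Once the integrands are identified, the conclusion is a one-line substitution. Inserting the $\mathbf{Q}$-equation (\ref{FDE_change_of_measure}) into (\ref{formulaforVP}) cancels the two $\int_0^t\langle\mathcal{Z}^{\mathbf{Q}}_s,d[\mathcal{B},N]_s\rangle$ terms and leaves
\[
V_t^{\lambda,\mathbf{P}}=\int_0^tF_s(\mathcal{Z}^{\mathbf{Q}}_s)\,ds=\int_0^tF_s(\mathcal{Z}^{\lambda,\mathbf{P}}_s(V^{\lambda,\mathbf{P}}))\,ds,
\]
which is exactly (\ref{FDE}); the stated hedging-strategy formula then follows because the $\mathbf{P}$- and $\mathbf{Q}$-representations share the same $\mathcal{Z}$.

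I expect the main obstacle to be the integrand-invariance step of the second paragraph: making rigorous the intuition that the martingale-representation integrand is untouched by the Girsanov drift, and in particular justifying that $\tilde{M}$ is a true, not merely local, $\mathbf{P}$-martingale so that it may be identified with the conditional expectation that defines $\mathcal{Z}^{\lambda,\mathbf{P}}(V^{\lambda,\mathbf{P}})$. This is exactly where the \emph{BMO} hypothesis on $N$ and the boundedness coming from Assumptions (A1)--(A3) enter, through the stability of \emph{BMO} martingales under the measure change already recorded from Kazamaki \cite{Kazamaki}. The cancellation of bracket terms in the asset dynamics and in the two functional differential equations is the algebraic mechanism that glues everything together, but it is routine once the invariance of $\mathcal{Z}$ is in hand.
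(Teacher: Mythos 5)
Your proposal is correct and follows essentially the same route as the paper: Girsanov's transformation $\mathcal{W}=\mathcal{B}-[\mathcal{B},N]$, the observation that the bracket terms cancel in the $\mathcal{S}$-dynamics, the invariance $\mathcal{Z}^{\lambda,\mathbf{Q}}(V^{\lambda,\mathbf{Q}})=\mathcal{Z}^{\lambda,\mathbf{P}}(V^{\lambda,\mathbf{P}})$ of the martingale-representation integrand, and the final cancellation in (\ref{formulaforVP}). The only difference is that you spell out, via Kazamaki's stability of \emph{BMO} martingales under the measure change, why $\tilde{M}$ is a true $\mathbf{P}$-martingale and hence identifiable with the $\mathbf{P}$-conditional expectation --- a step the paper's chain of equalities leaves implicit.
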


\begin{proof} By the
definition of $V^{\lambda,\mathbf{P}}$ in (\ref{formulaforVP}) and
functional differential equation (\ref{FDE_change_of_measure}), we
have
$$V_t^{\lambda,\mathbf{P}}=\int_0^tF_s(\mathcal{Z}_s^{\lambda,\mathbf{Q}}(V^{\lambda,\mathbf{Q}}))ds.$$
Hence, we only need to show that
$\mathcal{Z}_t^{\lambda,\mathbf{Q}}(V^{\lambda,\mathbf{Q}})=\mathcal{Z}_t^{\lambda,\mathbf{P}}(V^{\lambda,\mathbf{P}})$
for $t\in[0,T]$, which means
$\mathcal{Z}^{\lambda,\mathbf{Q}}(V^{\lambda,\mathbf{Q}})$ is
invariant under the change of probability measure. In other words,
the martingale representation is invariant under the change of
probability measure:
$$\int_t^{T}\langle\mathcal{Z}^{\lambda,\mathbf{Q}}_s(V^{\lambda,\mathbf{Q}}),d\mathcal{W}_s\rangle=M_T^{\lambda,\mathbf{P}}-M_t^{\lambda,\mathbf{P}},$$
with
$$M_t^{\lambda,\mathbf{P}}=E^{\mathbf{P}}\left[\left.\left(\lambda
g(\mathcal{S}_{\cdot})+\int_0^T\theta_sds\right)+V_{T}^{\lambda,\mathbf{P}}\right|\mathcal{F}_t\right],$$
and $\mathcal{W}=\mathcal{B}-[\mathcal{B},N]$ being Brownian motion
under the probability measure $\mathbf{P}$. Indeed, using Girsanov's
transformation,
\begin{align*}
&\
\int_t^{T}\langle\mathcal{Z}^{\lambda,\mathbf{Q}}_s(V^{\lambda,\mathbf{Q}}),d\mathcal{W}_s\rangle\\
=&\
\int_t^{T}\langle\mathcal{Z}^{\lambda,\mathbf{Q}}_s(V^{\lambda,\mathbf{Q}}),d\mathcal{B}_s\rangle,
-\int_t^{T}\langle\mathcal{Z}^{\lambda,\mathbf{Q}}_s(V^{\lambda,\mathbf{Q}}),d[\mathcal{B},N]_s\rangle\\
=&\ \left(\lambda
g(\mathcal{S}_{\cdot})+\int_0^T\theta_sds\right)+V_{T}^{\lambda,\mathbf{Q}}
-E^{\mathbf{Q}}\left[\left.\left(\lambda
g(\mathcal{S}_{\cdot})+\int_0^T\theta_sds\right)+V_{T}^{\mathbf{\lambda,\mathbf{Q}}}\right|\mathcal{F}_t\right]\\
&\ -
\int_t^{T}\langle\mathcal{Z}^{\lambda,\mathbf{Q}}_s(V^{\lambda,\mathbf{Q}}),d[\mathcal{B},N]_s\rangle
\\
=&\ \left(\lambda
g(\mathcal{S}_{\cdot})+\int_0^T\theta_sds\right)+V_{T}^{\lambda,\mathbf{P}}
-E^{\mathbf{P}}\left[\left.\left(\lambda
g(\mathcal{S}_{\cdot})+\int_0^T\theta_sds\right)+V_{T}^{\mathbf{\lambda,\mathbf{P}}}\right|\mathcal{F}_t\right]\\[+0.2cm]
=&\ M_T^{\lambda,\mathbf{P}}-M_t^{\lambda,\mathbf{P}}.
\end{align*}
\end{proof}

\subsection{Perturbations of Functional Differential Equations
}

By the pseudo linear pricing rule (\ref{Linearprcingrule}), we only
need to solve functional differential equation (\ref{FDE}) for
$V^{\lambda,\mathbf{P}}$ in order to obtain the utility indifference
price $\mathfrak{C}^{\lambda}_t$. Our first linear approximation for
the utility indifference price $\mathfrak{C}^{\lambda}_t$ is based
on perturbations of functional differential equation (\ref{FDE}),
the idea of which is motivated by Proposition 2 of Tevzadze
\cite{Tevzadze}\footnote{We thank one of the referees for the
suggestion of this method.}.

We first decompose the units of the option $\lambda$ as the
following finite sum: $\lambda=\sum_{j=1}^{J}\lambda_{j}$ such that
$$\lambda_j\leq\frac{\lambda}{32K_1K_2^2},$$
where $K_1$ is the constant from the John-Nirenburg inequality
(\ref{JN_inequality}), and $K_2$ is the constant from BSDE
(\ref{BSDE for n nontraded assets}) (see (\ref{const0}) and
(\ref{const1})). We then make perturbations of functional
differential equation (\ref{FDE}) as follows:
\begin{equation}\label{FDEpertubation}
V_t^{\lambda_j,\mathbf{P}}=\int_0^t
F_s\left(\sum_{k=1}^{j}\mathcal{Z}_s^{\lambda_k,\mathbf{P}}(V^{\lambda_k,\mathbf{P}})\right)-
F_s\left(\sum_{k=1}^{j-1}\mathcal{Z}_s^{\lambda_k,\mathbf{P}}(V^{\lambda_k,\mathbf{P}})\right)ds,
\end{equation}
with $\mathcal{Z}^{\lambda_{j},\mathbf{P}}(\cdot)$, as an affine
functional of $V$, given by
\begin{align*}
\int_t^{T}\langle\mathcal{Z}_s^{\lambda_j,\mathbf{P}}(V),d\mathcal{W}_s\rangle
=&\ \frac{\lambda_j}{\lambda}\left(\lambda
g(\mathcal{S}_{\cdot})+\int_0^T\theta_sds\right)+V_{T}\\
&\ -
E^{\mathbf{P}}\left[\left.\frac{\lambda_j}{\lambda}\left(\lambda
g(\mathcal{S}_{\cdot})+\int_0^T\theta_sds\right)+V_{T}\right|\mathcal{F}_t\right],
\end{align*}
and with $\sum_{k=1}^{0}=0$ by convention. Then it is easy to verify
that
$V_t^{\lambda,\mathbf{P}}=\sum_{j=1}^{J}V_t^{\lambda_j,\mathbf{P}}$
solves functional differential equation (\ref{FDE}). For functional
differential equation (\ref{FDEpertubation}), we can give the
following linear approximation for its solution
$V^{\lambda_{j},\mathbf{P}}$.\\

Define the Banach space $\mathfrak{V}([0,T];\mathbb{R})$ for the
continuous and $\mathcal{F}_t$-adapted processes valued in
$\mathbb{R}$, endowed with the norm
$$||V||_{\mathfrak{V}[0,T]}=\sup_{\tau}\left\|E[|V_T-V_{\tau}||\mathcal{F}_{\tau}]\right\|_{\infty}$$
for any $\mathcal{F}_t$-stopping time $\tau\in[0,T]$. Furthermore,
define its subspace
$$\mathfrak{V}([0,T];B_r)=\left\{V\in\mathfrak{V}([0,T];\mathbb{R}):
||V||_{\mathfrak{V}[0,T]}\leq r\ \text{for}\ r=
\frac{1}{32K_1K_2}\right\}.$$

\begin{proposition}\label{proposition1} Let ${\cal{B}}=(B^1,\cdots,B^d)$ be
a $d$-dimensional Brownian motion on a filtered probability space
$(\Omega,\mathcal{F},\{\mathcal{F}_t\},\mathbf{Q})$. For fixed $j$
where $1\leq j\leq J$, suppose that we have solved functional
differential equation (\ref{FDEpertubation}) and obtained its
solution $V^{\lambda_k,\mathbf{P}}$ for $k=1,\cdots,j-1$, so that we
have the affine functionals
$\mathcal{Z}^{\lambda_k,\mathbf{P}}(\cdot)$. Then $N^j$ defined by
$$N^{j}=\int_0^{\cdot}\langle\nabla_zF_s\left(\sum_{k=1}^{j-1}
\mathcal{Z}_s^{\lambda_k,\mathbf{P}}(V^{\lambda_k,\mathbf{P}})\right),d\mathcal{B}_s\rangle$$
is a $\mathbf{Q}$-\text{BMO} martingale.

Define the following sequence
$\{V^{\lambda_j,\mathbf{Q}}(m)\}_{m\geq 0}$ iteratively:
$V^{\lambda_j,\mathbf{Q}}(0)=0$,
\begin{equation*}
V_t^{\lambda_j,\mathbf{Q}}(m+1)=
\int_0^t\tilde{F}_s\left(\mathcal{Z}_s^{\lambda_j,\mathbf{Q}}(V^{\lambda_j,\mathbf{Q}}(m))\right)ds,
\end{equation*}
with $\tilde{F}_s\left(z\right)$ given by
\begin{align*}
\tilde{F}_s^{j}(z)=&\
F_s\left(\sum_{k=1}^{j-1}\mathcal{Z}_s^{\lambda_k,\mathbf{P}}(V^{\lambda_k,\mathbf{P}})+z\right)-
F_s\left(\sum_{k=1}^{j-1}\mathcal{Z}_s^{\lambda_k,\mathbf{P}}(V^{\lambda_k,\mathbf{P}})\right)\\
&\ -\langle\nabla_zF_s\left(\sum_{k=1}^{j-1}
\mathcal{Z}_s^{\lambda_k,\mathbf{P}}(V^{\lambda_k,\mathbf{P}})\right),z\rangle.
\end{align*}
Then $\{V^{\lambda_{j},\mathbf{Q}}(m)\}_{m\geq 0}$ converges to some
$V^{\lambda_{j},\mathbf{Q}}$ in the space $\mathfrak{V}([0,T];B_r)$
with the convergence rate
$$||V^{\lambda_{j},\mathbf{Q}}-V^{\lambda_{j},\mathbf{Q}}(m)||_{\mathfrak{V}[0,T]}\leq r\left(\frac{1}{2}\right)^{m-1},$$
and $V^{\lambda_j,\mathbf{P}}$ is obtained by (\ref{formulaforVP}):
$$V_t^{\lambda_j,\mathbf{P}}=V_t^{\lambda_j,\mathbf{Q}}-
\int_0^t\langle
\mathcal{Z}_s^{\lambda_j,\mathbf{Q}}(V^{\lambda_j,\mathbf{Q}}),
\nabla_zF_s\left(\sum_{k=1}^{j-1}
\mathcal{Z}_s^{\lambda_k,\mathbf{P}}(V^{\lambda_k,\mathbf{P}})\right)\rangle
ds.$$
\end{proposition}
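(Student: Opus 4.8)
My plan is to recast the iteration as a Banach fixed-point argument in the complete ball $\mathfrak{V}([0,T];B_r)$, after recording two preliminary facts. First, since $\nabla_{zz}F_s$ is independent of $z$ (as computed in the proof of Theorem \ref{Lemma for n nontraded assets}), the driver $F_s$ is exactly quadratic, so the perturbed driver collapses to the pure second-order term $\tilde F^j_s(z)=\frac{1}{2}z\,\nabla_{zz}F_s\,z^T$, which depends neither on $j$ nor on $w:=\sum_{k=1}^{j-1}\mathcal{Z}^{\lambda_k,\mathbf{P}}(V^{\lambda_k,\mathbf{P}})$, and which by the two-sided bound in (\ref{const0}) satisfies $|\tilde F^j_s(z)|\le\frac{K_2}{2}\|z\|^2$ as well as $|\tilde F^j_s(z)-\tilde F^j_s(z')|\le\frac{K_2}{2}\|z-z'\|(\|z\|+\|z'\|)$. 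Second, to justify the measure change I would show $N^j$ is a $\mathbf{Q}$-BMO martingale: the processes $\mathcal{Z}^{\lambda_k,\mathbf{P}}$ obtained in the earlier steps are BMO integrands (Theorem \ref{Lemma for n nontraded assets}), hence so is their finite sum $w$, and the linear growth $\|\nabla_zF_s(w)\|\le K_2(1+\|w\|)$ from (\ref{const0}) propagates this to $\nabla_zF_s(w)$; invariance of BMO under the equivalent change of measure (Kazamaki) handles the passage between $\mathbf{P}$ and $\mathbf{Q}$. This is precisely the hypothesis needed to invoke Corollary \ref{changemeasurecorollary} with $N=N^j$.

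The central technical device is the passage from the $\mathfrak{V}$-norm, which controls an $L^1$-type conditional deviation, to the $L^2$-type conditional quadratic variation that governs the quadratic driver. For any $V$, the process $\mathcal{Z}^{\lambda_j,\mathbf{Q}}(V)$ is the martingale-representation integrand of $M^V_t:=E^{\mathbf{Q}}[\frac{\lambda_j}{\lambda}(\lambda g(\mathcal{S}_\cdot)+\int_0^T\theta_sds)+V_T\,|\,\mathcal{F}_t]$, so $E^{\mathbf{Q}}[\int_\tau^T\|\mathcal{Z}^{\lambda_j,\mathbf{Q}}_s(V)\|^2ds\,|\,\mathcal{F}_\tau]=E^{\mathbf{Q}}[|M^V_T-M^V_\tau|^2\,|\,\mathcal{F}_\tau]$. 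Bounding the two summands of $M^V_T-M^V_\tau$ separately, using $|\lambda g(\mathcal{S}_\cdot)+\int_0^T\theta_sds|\le K_2$ from (\ref{const1}) for the payoff part and $2\|V\|_{\mathfrak{V}[0,T]}$ for the $V_T$ part, and then applying the John--Nirenburg inequality (\ref{JN_inequality}), yields $\sup_\tau\|E^{\mathbf{Q}}[\int_\tau^T\|\mathcal{Z}^{\lambda_j,\mathbf{Q}}_s(V)\|^2ds\,|\,\mathcal{F}_\tau]\|_\infty\le K_1(2\frac{\lambda_j}{\lambda}K_2+2\|V\|_{\mathfrak{V}[0,T]})^2$. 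Feeding this into $|\tilde F^j|\le\frac{K_2}{2}\|\cdot\|^2$ shows that $\Phi(V)_t:=\int_0^t\tilde F^j_s(\mathcal{Z}^{\lambda_j,\mathbf{Q}}_s(V))ds$ maps $B_r$ into itself, since for $V\in B_r$ the calibrations $\frac{\lambda_j}{\lambda}K_2\le r$ and $\|V\|_{\mathfrak{V}[0,T]}\le r$ give $\|\Phi(V)\|_{\mathfrak{V}[0,T]}\le\frac{K_2}{2}K_1(4r)^2=8K_1K_2r^2=\frac{1}{4}r$ for $r=\frac{1}{32K_1K_2}$.

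For the contraction I would use that $V\mapsto\mathcal{Z}^{\lambda_j,\mathbf{Q}}(V)$ is affine, with $\mathcal{Z}^{\lambda_j,\mathbf{Q}}(V)-\mathcal{Z}^{\lambda_j,\mathbf{Q}}(V')$ the integrand of $E^{\mathbf{Q}}[V_T-V'_T\,|\,\mathcal{F}_\cdot]$ (the payoff terms cancel). A conditional Cauchy--Schwarz applied to $\frac{K_2}{2}\|\mathcal{Z}(V)-\mathcal{Z}(V')\|(\|\mathcal{Z}(V)\|+\|\mathcal{Z}(V')\|)$ splits the estimate into the square roots of two conditional quadratic variations; the first is bounded via John--Nirenburg by $2\sqrt{K_1}\|V-V'\|_{\mathfrak{V}[0,T]}$ and the second, by the self-map bound, by $O(\sqrt{K_1}\,r)$, so $\|\Phi(V)-\Phi(V')\|_{\mathfrak{V}[0,T]}\le 8K_1K_2r\,\|V-V'\|_{\mathfrak{V}[0,T]}=\frac{1}{4}\|V-V'\|_{\mathfrak{V}[0,T]}$. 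Thus $\Phi$ is a contraction on the complete space $\mathfrak{V}([0,T];B_r)$, the iterates starting from $V^{\lambda_j,\mathbf{Q}}(0)=0$ converge to the unique fixed point $V^{\lambda_j,\mathbf{Q}}$, and telescoping the geometric estimate (the crude factor $\frac{1}{4}<\frac{1}{2}$ suffices) gives the stated rate $r(\frac{1}{2})^{m-1}$.

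Finally, to return to $\mathbf{P}$ I would apply Corollary \ref{changemeasurecorollary} with $N=N^j$. Because $d[\mathcal{B},N^j]_s=\nabla_zF_s(w)ds$, the Girsanov correction in (\ref{FDE_change_of_measure}) is exactly the first-order term $\langle\nabla_zF_s(w),\cdot\rangle$ that, together with $\tilde F^j$, reconstitutes the perturbed driver $F_s(w+\cdot)-F_s(w)$ of (\ref{FDEpertubation}); with the sign tracked through Girsanov's transformation, the change of measure strips this linear term off, leaving the purely quadratic $\tilde F^j$ under $\mathbf{Q}$. Invariance of the martingale representation under the change of measure (established in the proof of Corollary \ref{changemeasurecorollary}) gives $\mathcal{Z}^{\lambda_j,\mathbf{P}}(V^{\lambda_j,\mathbf{P}})=\mathcal{Z}^{\lambda_j,\mathbf{Q}}(V^{\lambda_j,\mathbf{Q}})$, and (\ref{formulaforVP}) then returns $V^{\lambda_j,\mathbf{P}}$ in the stated form. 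I expect the main obstacle to be the contraction estimate: the difference of the quadratic drivers forces one to bound a product of two BMO-type quantities, and this closes only because the splitting $\lambda=\sum_j\lambda_j$ and the radius $r$ are calibrated so that $8K_1K_2r$ lands strictly below one.
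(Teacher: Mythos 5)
Your proposal is correct and follows essentially the same route as the paper: verify the \textbf{Q}-BMO property of $N^j$ via the identification $\mathcal{Z}^{\lambda_k,\mathbf{P}}=\mathcal{Z}^{\lambda_k,\mathbf{Q}}$ and the linear growth of $\nabla_zF$, establish the quadratic and local-Lipschitz bounds on $\tilde F^j$, use the martingale representation together with the John--Nirenburg inequality to pass between the $\mathfrak{V}$-norm and the conditional quadratic variation of $\mathcal{Z}^{\lambda_j,\mathbf{Q}}(V)$, and close the self-map and contraction estimates on $\mathfrak{V}([0,T];B_r)$ using exactly the calibration $\lambda_j\le\lambda/(32K_1K_2^2)$ and $r=1/(32K_1K_2)$. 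Your observation that $F$ is exactly quadratic in $z$, so that $\tilde F^j_s(z)=\tfrac12 z\nabla_{zz}F_s z^T$ independently of $j$, is a mild simplification of the paper's mean-value-theorem argument but does not change the structure of the proof.
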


\begin{proof} For fixed $j$, we first verify that $N^{j}$ is a
$\mathbf{Q}$-\emph{BMO} martingale. By Corollary
\ref{changemeasurecorollary}, we have
$$\mathcal{Z}_s^{\lambda_k,\mathbf{P}}(V^{\lambda_k,\mathbf{P}})=\mathcal{Z}_s^{\lambda_k,\mathbf{Q}}(V^{\lambda_k,\mathbf{Q}})$$
for $k=1,\cdots,j-1$. Therefore,
\begin{align*}
&\ \sup_{\tau}\left\|E^{\mathbf{Q}}[|N_T^j-N_{\tau}^j|^2|\mathcal{F}_{\tau}]\right\|_{\infty}\\
=&\
\sup_{\tau}\left\|E^{\mathbf{Q}}\left[\left.\int_t^T\left|\nabla_{z}F_s\left(\sum_{k=1}^{j-1}
\mathcal{Z}_s^{\lambda_k,\mathbf{Q}}(V^{\lambda_k,\mathbf{Q}})\right)\right|^2ds\right|\mathcal{F}_{\tau}\right]\right\|_{\infty}\\
\leq&\
\sup_{\tau}\left\|E^{\mathbf{Q}}\left[\left.\int_t^T2K_2^2(1+||\mathcal{Z}_s^{\lambda_k,\mathbf{Q}}(V^{\lambda_k,\mathbf{Q}})||^2)ds
\right|\mathcal{F}_{\tau}\right]\right\|_{\infty}\\
<&\ \infty,
\end{align*}
where we used (\ref{const0}) and the $\mathbf{Q}$-\emph{BMO}
martingale property of
$$\int_0^{\cdot}\langle\mathcal{Z}_s^{\lambda_k,\mathbf{Q}}(V^{\lambda_k,\mathbf{Q}}),d\mathcal{B}_s\rangle.$$

Next, we consider the convergence of the sequence
$\{V^{\lambda_j,\mathbf{Q}}(m)\}_{m\geq 0}$. Similar to Remark 1 of
\cite{Tevzadze}, by using the mean value theorem twice on
$\tilde{F}_s^{j}(\cdot)$ and by using (\ref{const0}), it is easy to
verify that
\begin{align*}
|\tilde{F}_s^{j}(z)-\tilde{F}_s^{j}(\bar{z})|\leq K_2(||z||+||\bar{z}||)||z-\bar{z}||,\\
|\tilde{F}_s^{j}(z)|\leq K_2||z||^2.
\end{align*}
Given that $V^{\lambda_j,\mathbf{Q}}(m)\in\mathfrak{V}([0,T];B_r)$,
we need to verify that $V^{\lambda_j,\mathbf{Q}}(m+1)$ is in the
same space $\mathfrak{V}([0,T];B_r)$. Indeed,
\begin{align}\label{inequality1}
&\ ||V^{\lambda_j,\mathbf{Q}}(m+1)||_{\mathfrak{V}[0,T]}\nonumber\\
=&\ \sup_{\tau}\left\|E^{\mathbf{Q}}
\left[\left.\left|\int_{\tau}^T\tilde{F}_s\left(\mathcal{Z}_s^{\lambda_j,\mathbf{Q}}(V^{\lambda_j,\mathbf{Q}}(m))\right)ds\right|\ \right|\mathcal{F}_{\tau}\right]\right\|_{\infty}\notag\\
\leq&\ K_2\sup_{\tau}\left\|E^{\mathbf{Q}}
\left[\left.\int_{\tau}^T\left\|\mathcal{Z}_s^{\lambda_j,\mathbf{Q}}(V^{\lambda_j,\mathbf{Q}}(m))\right\|^2ds\right|\mathcal{F}_{\tau}\right]\right\|_{\infty}\notag\\
=&\ K_2\sup_{\tau}\left\|E^{\mathbf{Q}}
\left[\left.\left|\int_{\tau}^T\langle\mathcal{Z}_s^{\lambda_j,\mathbf{Q}}(V^{\lambda_j,\mathbf{Q}}(m)),d\mathcal{B}_s\rangle\right|^2\right|\mathcal{F}_{\tau}\right]\right\|_{\infty}\notag\\
\leq&\ K_1K_2\sup_{\tau}\left\|E^{\mathbf{Q}}
\left[\left.\left|\int_{\tau}^T\langle\mathcal{Z}_s^{\lambda_j,\mathbf{Q}}(V^{\lambda_j,\mathbf{Q}}(m)),d\mathcal{B}_s\rangle\right|\
\right|\mathcal{F}_{\tau}\right]\right\|_{\infty}^2,
\end{align}
where we used the John-Nirenburg inequality (\ref{JN_inequality}) in
the last inequality. In the following, we denote
$$\xi_{j}=\frac{\lambda_j}{\lambda}\left(\lambda
g(\mathcal{S}_{\cdot})+\int_0^T\theta_sds\right).$$ By the
definition of $\lambda_{j}$ and (\ref{const1}), $$|\xi_{j}|\leq
\frac{1}{32K_1K_2}.$$ With the notation $\xi_j$, the affine
functional $\mathcal{Z}^{\lambda_j,\mathbf{Q}}(\cdot)$ is rewritten
as
$$\int_t^{T}\langle\mathcal{Z}_s^{\lambda_j,\mathbf{Q}}(V),d\mathcal{B}_s\rangle
=\xi_j+V_T-E^{\mathbf{Q}}[\xi_j+V_T|\mathcal{F}_t].$$ Therefore,
following (\ref{inequality1}),
$||V^{\lambda_j,\mathbf{Q}}(m+1)||_{\mathfrak{V}[0,T]}$ is further
dominated by
\begin{align*}
&\ K_1K_2\sup_{\tau}\left\|E^{\mathbf{Q}}
\left[\left.\left|\xi_j+V_{T}^{\lambda_j,\mathbf{Q}}(m)-E^{\mathbf{Q}}[\xi_j+V_{T}^{\lambda_j,\mathbf{Q}}(m)|\mathcal{F}_{\tau}]\right|\ \right|\mathcal{F}_{\tau}\right]\right\|_{\infty}^2\\
=&\ K_1K_2\sup_{\tau}\left\|E^{\mathbf{Q}}
\left[\left|\xi_j-E^{\mathbf{Q}}[\xi_j|\mathcal{F}_{\tau}]+V_{T}^{\lambda_j,\mathbf{Q}}(m)-V_{\tau}^{\lambda_j,\mathbf{Q}}(m)\right.\right.\right.\\
&\ \ \ \ \ \ \ \ \ \ \ \ \ \ \
\left.\left.\left.\left.-E^{\mathbf{Q}}[V_{T}^{\lambda_j,\mathbf{Q}}(m)-V_{\tau}^{\lambda_j,\mathbf{Q}}(m)|\mathcal{F}_{\tau}]\right|\ \right|\mathcal{F}_{\tau}\right]\right\|_{\infty}^2\\
\leq&\
4K_1K_2\left(\sup_{\tau}\left\|E^{\mathbf{Q}}\left[|\xi_{j}||\mathcal{F}_{\tau}\right]\right\|^2_{\infty}
+\sup_{\tau}\left\|E^{\mathbf{Q}}\left[\left|E^{\mathbf{Q}}\left(\xi_{j}|\mathcal{F}_{\tau}\right)\right||\mathcal{F}_{\tau}\right]\right\|^2_{\infty}\right.\\
&\ \ \ \ \ \ \ \ \ \ \ \
+\sup_{\tau}\left\|E^{\mathbf{Q}}\left[\left.\left|V_T^{\lambda_{j},\mathbf{Q}}(m)-V_{\tau}^{\lambda_{j},\mathbf{Q}}(m)\right|\ \right|\mathcal{F}_{\tau}\right]\right\|_{\infty}^{2}\\
&\ \ \ \ \ \ \ \ \ \ \ \
\left.+\sup_{\tau}\left\|E^{\mathbf{Q}}\left[\left.\left|E^{\mathbf{Q}}\left(V_T^{\lambda_{j},\mathbf{Q}}(m)-V_{\tau}^{\lambda_{j},\mathbf{Q}}(m)|\mathcal{F}_{\tau}\right)\right|\ \right|\mathcal{F}_{\tau}\right]\right\|_{\infty}^{2}\right)\\
\leq&\
8K_1K_2\left(|\xi_j|^2+||V^{\lambda_{j},\mathbf{Q}}(m)||^2_{\mathfrak{V}[0,T]}\right)\leq
1/(64K_1K_2)\leq r.
\end{align*}

Similarly, we consider the difference $\delta
V^{\lambda_j,\mathbf{Q}}(m)=V^{\lambda_j,\mathbf{Q}}(m+1)-V^{\lambda_j,\mathbf{Q}}(m)$,
\begin{align*}
&\ ||\delta V^{\lambda_j,\mathbf{Q}}(m)||_{\mathfrak{V}[0,T]}^2\\
\leq&\ 2K_1^2K_2^2\sup_{\tau}\left\|E^{\mathbf{Q}}
\left[\left.\left|\int_t^T\langle\delta\mathcal{Z}^{\lambda_j,\mathbf{Q}}(V^{\lambda_j,\mathbf{Q}}(m-1)),d\mathcal{B}_s\rangle\right|\ \right|\mathcal{F}_{\tau}\right]\right\|^2_{\infty}\\
&\ \times\left\{ \sup_{\tau}\left\|E^{\mathbf{Q}}
\left[\left.\left|\int_t^T\langle\mathcal{Z}^{\lambda_j,\mathbf{Q}}(V^{\lambda_j,\mathbf{Q}}(m)),d\mathcal{B}_s\rangle\right|\ \right|\mathcal{F}_{\tau}\right]\right\|^2_{\infty}\right.\\
&\ \ \ \ \ \ \left.+\sup_{\tau}\left\|E^{\mathbf{Q}}
\left[\left.\left|\int_t^T\langle\mathcal{Z}^{\lambda_j,\mathbf{Q}}(V^{\lambda_j,\mathbf{Q}}(m-1)),d\mathcal{B}_s\rangle\right|\
\right|\mathcal{F}_{\tau}\right]\right\|^2_{\infty}
\right\}\\
\leq&\ 2K_1^2K_2^2\times4||\delta
V^{\lambda_{j},\mathbf{Q}}(m-1)||^2_{\mathfrak{V}[0,T]}\\
&\ \times2\left(8||\xi_j||^2+
4||V^{\lambda_{j},\mathbf{Q}}(m)||^2_{\mathfrak{V}[0,T]}+
4||V^{\lambda_{j},\mathbf{Q}}(m-1)||^2_{\mathfrak{V}[0,T]}\right)\\
\leq&\ \frac14||\delta
V^{\lambda_{j},\mathbf{Q}}(m-1)||^2_{\mathfrak{V}[0,T]}.
\end{align*}
We iterate the above inequality, and obtain
$$||\delta
V^{\lambda_{j},\mathbf{Q}}(m)||_{\mathfrak{V}[0,T]}\leq\left(\frac{1}{2}\right)^m||
V^{\lambda_{j},\mathbf{Q}}(1)||_{\mathfrak{V}[0,T]}\leq
\left(\frac{1}{2}\right)^m\frac{1}{32K_1K_2}.$$ Hence, for any
natural number $p$,
\begin{align*}
&\
||V^{\lambda_j,\mathbf{Q}}(m+p)-V^{\lambda_j,\mathbf{Q}}(m)||_{\mathfrak{V}[0,T]}\\
\leq&\
\sum_{j=1}^{p}||V^{\lambda_j,\mathbf{Q}}(m+j)-V^{\lambda_j,\mathbf{Q}}(m+j-1)||_{\mathfrak{V}[0,T]}\\
\leq&\ \left(\frac12\right)^{m-1}\times\frac{1}{32K_1K_2}.
\end{align*}
Letting $m\uparrow\infty$, we deduce that
$\{V^{\lambda_j,\mathbf{Q}}(m)\}_{m\geq 0}$ is a Cauchy sequence,
and converges to some limit $V^{\lambda_j,\mathbf{Q}}$. On the other
hand, letting $p\uparrow\infty$, we obtain the convergence rate.
\end{proof}

\subsection{Nonlinear Girsanov's Transformation}

The crucial step for our pseudo linear pricing rule
(\ref{Linearprcingrule}) is to solve functional differential
equation (\ref{FDE}) in order to obtain $V^{\lambda,\mathbf{P}}$.
Our second linear approximation for the utility indifference price
$\mathfrak{C}^{\lambda}_t$ is based on a nonlinear version of
Girsanov's transformation in order to vanish the driver $F_t(z)$ of
(\ref{FDE}). Such an idea has been known in the BSDE literature. For
example, it appears as Proposition 11 in Mania and Schweizer
\cite{MS} and measure solutions of BSDEs in Ankirchner et al
\cite{Imkeller0}, where they model the terminal data (the payoff) as
a general random variable. In contrast, as we specify the dynamics
of the underlying and the payoff structure, a coupled FBSDE appears
naturally.

The intuitive idea is as follows: Note that in Corollary
\ref{changemeasurecorollary}, if we choose $N$:
\begin{align}\label{nonlinear_measure}
N=&\ \int_0^{\cdot}\frac{\gamma}{2}\langle\mathcal{Z}^{\lambda,\mathbf{Q}}_s(V^{\lambda,\mathbf{Q}}),d\mathcal{B}_s\rangle\nonumber\\
&\
-\int_0^{\cdot}\frac{\gamma}{2||\sigma_s^P||^2}\left\{\langle\sigma_s^P,\mathcal{Z}^{\lambda,\mathbf{Q}}_s(V^{\lambda,\mathbf{Q}})\rangle-\frac{2\mu_s^P}{\gamma}\right\}\langle\sigma_s^P,d\mathcal{B}_s\rangle,
\end{align}
then the driver of functional differential equation
(\ref{FDE_change_of_measure}) will vanish, and
$V^{\lambda,\mathbf{Q}}_t=0$ for $t\in[0,T]$. It seems that
$V^{\lambda,\mathbf{P}}$ can be easily obtained by
(\ref{formulaforVP}) and (\ref{FDE_change_of_measure}):
$$V_t^{\lambda,\mathbf{P}}=0-
\int_0^t\langle\mathcal{Z}_s^{\lambda,\mathbf{Q}}(0),d[\mathcal{B},N]_s\rangle=0+
\int_0^tF_s(\mathcal{Z}_s^{\lambda,\mathbf{Q}}(0))ds.$$ However, in
this situation, the stochastic processes
$\mathcal{Z}^{\lambda,\mathbf{Q}}(0)$ and $\mathcal{S}$ depend on
each other as a loop:
\begin{equation}\label{backwardsde}
\int_t^{T}\langle\mathcal{Z}_s^{\lambda,\mathbf{Q}}(0),d\mathcal{B}_s\rangle
=\left(\lambda g(\mathcal{S}_{\cdot})+\int_0^T\theta_sds\right)
-E^{\mathbf{Q}}\left[\left.\left(\lambda
g(\mathcal{S}_{\cdot})+\int_0^T\theta_sds\right)\right|\mathcal{F}_t\right],
\end{equation}
and $\mathcal{S}=(S^1,\cdots,S^d)$ given by
\begin{equation}\label{forwardsde}
S_t^i=S_0^i+\int_0^tS_s^i(\mu^i_sds- \langle
\sigma^i_s,d[\mathcal{B},N]_s\rangle+ \langle
\sigma^i_s,d\mathcal{B}_s\rangle).
\end{equation}
Hence, we must solve (\ref{forwardsde}) as a functional differential
equation, which depends on $\mathcal{Z}^{\lambda,\mathbf{Q}}(0)$ as
a functional of the whole path of $\mathcal{S}$. Note that
(\ref{backwardsde}) and (\ref{forwardsde}) also form a special case
of coupled FBSDEs, if we introduce a backward process $Y^{\lambda}$
as conditional expectation:
$$Y_t^{\lambda}=E^{\mathbf{Q}}\left[\left.\left(\lambda
g(\mathcal{S}_{\cdot})+\int_0^T\theta_sds\right)\right|\mathcal{F}_t\right].$$

With the help of such a nonlinear Girsanov's transformation, the
remaining task is to solve functional differential equation
(\ref{forwardsde}). In the following, we work out its solution in a
special Markovian setting:
\begin{itemize}
\item \textbf{Assumption (A4)}: All the coefficients are
deterministic, and the payoff $g(\cdot)$ is a positive bounded
function satisfying:
$$|g(\mathcal{S}_T)-g(\bar{\mathcal{S}}_T)|\leq \frac{K_3}{\lambda}
\sum_{i=1}^d|\ln S_T^i-\ln\bar{S}_T^i|.$$
\end{itemize}

A typical example that we keep in mind is $g(s)=\min(K,s)$ for some
constant $K>0$. Under Assumptions (A1)-(A4), the conditional
expectation $Y_t^{\lambda}$ can be written as a function of time $t$
and the price process $\mathcal{S}_t$: $Y^{\lambda}_t=
Y^{\lambda}(t,\mathcal{S}_t)$. If we define the log process
$\mathcal{X}_t^i=\ln\mathcal{S}^i_t$ for $i=1,\cdots,d$, then
$Y^{\lambda}(t,e^{\mathcal{X}_t})$ is uniformly Lipschitz continuous
in $\mathcal{X}_t=(\mathcal{X}_t^1,\cdots,\mathcal{X}_t^d)$ (see
Theorem 2.9 of Delarue \cite{MR2053051}). We denote such a Lipschitz
constant still as $K_3$.

We first make a partition of $[0,T]$ as follows:
$0=t_0<t_1<\cdots<t_J=T$ such that
$$\max_{1\leq j\leq J}\Delta_j=\max_{1\leq j\leq J}|t_j-t_{j-1}|\leq \frac{1}{8K_3^2K_4},$$
where the definition of $K_4$ is given in the proof of the following
Proposition \ref{proposition2}. On each interval $[t_{j-1},t_{j}]$,
functional differential equation (\ref{forwardsde}) can be
reformulated in terms of $\mathcal{X}$:
\begin{equation}\label{FDE_for_x}
\mathcal{X}_t^{i}=\mathcal{X}_{t_{j-1}}^{i}+
\int_{t_{j-1}}^t(\mu^i_s-\frac12||\sigma_s^i||^2)ds- \langle
\sigma^i_s,d[\mathcal{B},N]_s\rangle+
\langle\sigma^i_s,d\mathcal{B}_s\rangle.
\end{equation}
For functional differential equation (\ref{FDE_for_x}), we can give
the following linear approximation for its solution $\mathcal{X}$.\\

Define the Banach space $\mathfrak{S}([t_{j-1},t_j];\mathbb{R}^d)$
for the continuous and $\mathcal{F}_t$-adapted processes valued in
$\mathbb{R}^d$, endowed with the norm:
$$||\mathcal{X}||_{\mathfrak{S}[t_{j-1},t_j]}=
E\left\{\sup_{t\in[t_{j-1},t_j]}||\mathcal{X}_t||^2\right\}^{1/2}.$$

\begin{proposition}\label{proposition2}
Let ${\cal{B}}=(B^1,\cdots,B^d)$ be a $d$-dimensional Brownian
motion on a filtered probability space
$(\Omega,\mathcal{F},\{\mathcal{F}_t\},\mathbf{Q})$. For fixed $j$
where $1\leq j\leq J$, suppose that we have solved functional
differential equation (\ref{FDE_for_x}) and obtained its solution
$\mathcal{X}_t$ for $t\in[t_j,t_{j+1}],\cdots,[t_{J-1},T]$, so that
we have a Lipschitz continuous function $Y^{\lambda}(t_j,e^{x})$ at
time $t_j$.

Define the sequence $\{\mathcal{X}(m)\}_{m\geq 0}$ on
$[t_{j-1},t_j]$ iteratively: $\mathcal{X}(0)=x$,
\begin{equation*}
\mathcal{X}_t^{i}(m+1)=x^{i}+
\int_{t_{j-1}}^t(\mu^i_s-\frac12||\sigma_s^i||^2)ds- \langle
\sigma^i_s,d[\mathcal{B},N(m)]_s\rangle+
\langle\sigma^i_s,d\mathcal{B}_s\rangle,
\end{equation*}
where $N(m)$ is given by (\ref{nonlinear_measure}) with
$\mathcal{Z}^{\lambda,\mathbf{Q}}(V^{\lambda,\mathbf{Q}})$ replaced
by $\mathcal{Z}^{\lambda,\mathbf{Q}}(0,m)$:
$$
\int_t^{t_j}\langle\mathcal{Z}_s^{\lambda,\mathbf{Q}}(0,m),d\mathcal{B}_s\rangle
=Y^{\lambda}\left(t_j,e^{\mathcal{X}_{t_j}(m)}\right)-
E^{\mathbf{Q}}\left[Y^{\lambda}\left(t_j,e^{\mathcal{X}_{t_j}(m)}\right)|\mathcal{F}_t\right].$$
Then $\{\mathcal{X}(m)\}_{m\geq 0}$ converges to some $\mathcal{X}$
in the space $\mathfrak{S}([t_{j-1},t_j];\mathbb{R}^d)$ with the
convergence rate
$$||\mathcal{X}-\mathcal{X}(m)||_{\mathfrak{S}[0,T]}\leq
\left(\frac12\right)^{m-1}\times||\mathcal{X}(1)-\mathcal{X}(0)||_{\mathfrak{S}[t_{j-1},t_j]},$$
from which we get a Lipschitz continuous function at time $t_{j-1}$
$$Y^{\lambda}(t_{j-1},e^{\mathcal{X}_{t_{j-1}}})=E^{\mathbf{Q}}\left[Y^{\lambda}\left(t_j,e^{\mathcal{X}_{t_j}}\right)|\mathcal{F}_{t_{j-1}}\right],$$
and $\mathcal{Z}^{\lambda,\mathbf{Q}}(0)$ on $[t_{j-1},t_j]$
$$\langle\int_{t_{j-1}}^{t_j}\mathcal{Z}^{\lambda,\mathbf{Q}}(0),d\mathcal{B}_s\rangle=
Y^{\lambda}\left(t_{j},e^{\mathcal{X}_{t_j}}\right)-Y^{\lambda}\left(t_{j-1},e^{\mathcal{X}_{t_{j-1}}}\right).$$
Finally, $N$ defined by (\ref{nonlinear_measure}) is a
$\mathbf{Q}$-\text{BMO} martingale, and $V^{\lambda,\mathbf{P}}$ is
obtained by (\ref{formulaforVP}) and (\ref{FDE_change_of_measure}):
$$V_t^{\lambda,\mathbf{P}}=
\int_0^tF_s(\mathcal{Z}_s^{\lambda,\mathbf{Q}}(0))ds.$$
\end{proposition}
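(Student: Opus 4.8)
The plan is to run a backward induction over the partition $0=t_0<\cdots<t_J=T$, handling each interval $[t_{j-1},t_j]$ by a fixed--point (Picard) argument in the Banach space $\mathfrak{S}([t_{j-1},t_j];\mathbb{R}^d)$. The base case $[t_{J-1},T]$ uses the terminal datum $\lambda g(\mathcal{S}_T)$, which by Assumption (A4) is Lipschitz in $\mathcal{X}_T=\ln\mathcal{S}_T$ with constant $K_3$; assuming (\ref{FDE_for_x}) has been solved on the later intervals furnishes a Lipschitz decoupling field $Y^{\lambda}(t_j,e^{x})$ with the same constant $K_3$ (guaranteed globally by Delarue's a priori bound \cite{MR2053051}) to serve as terminal condition on $[t_{j-1},t_j]$. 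The Picard map is the composition: given $\mathcal{X}(m)$, form the terminal value $Y^{\lambda}(t_j,e^{\mathcal{X}_{t_j}(m)})$; its martingale representation defines $\mathcal{Z}^{\lambda,\mathbf{Q}}(0,m)$; this defines $N(m)$ through (\ref{nonlinear_measure}) and hence the finite--variation term $\langle\sigma^i_s,d[\mathcal{B},N(m)]_s\rangle$ driving the forward equation that outputs $\mathcal{X}(m+1)$. The proposition asserts that, for a mesh satisfying $\Delta_j\le 1/(8K_3^2K_4)$, this map contracts.

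The decisive simplification in the contraction estimate is that the Brownian term $\langle\sigma^i_s,d\mathcal{B}_s\rangle$ and the deterministic drift $\mu^i_s-\frac12\|\sigma^i_s\|^2$ are independent of $m$, so that the difference of iterates is a pure finite--variation object,
\[
\mathcal{X}^i_t(m+1)-\mathcal{X}^i_t(m)=-\int_{t_{j-1}}^t\langle\sigma^i_s,\,d[\mathcal{B},N(m)]_s-d[\mathcal{B},N(m-1)]_s\rangle .
\]
Writing $N(m)=\int_0^{\cdot}\langle\phi_s(m),d\mathcal{B}_s\rangle$, the integrand $\phi(m)$ is affine in $\mathcal{Z}^{\lambda,\mathbf{Q}}(0,m)$ with coefficients bounded through Assumptions (A1)--(A2); its constant part (the $\mu^P_s$ term) cancels in the difference, so $\phi(m)-\phi(m-1)$ is linear in $\mathcal{Z}^{\lambda,\mathbf{Q}}(0,m)-\mathcal{Z}^{\lambda,\mathbf{Q}}(0,m-1)$. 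Cauchy--Schwarz over an interval of length $\Delta_j$, followed by taking the supremum and the expectation, then yields
\[
\|\mathcal{X}(m+1)-\mathcal{X}(m)\|^2_{\mathfrak{S}[t_{j-1},t_j]}\le \Delta_j K_4\, E^{\mathbf{Q}}\int_{t_{j-1}}^{t_j}\|\mathcal{Z}^{\lambda,\mathbf{Q}}_s(0,m)-\mathcal{Z}^{\lambda,\mathbf{Q}}_s(0,m-1)\|^2 ds,
\]
where $K_4$ collects the bounds on $\sigma^i$, $\gamma$ and $\|\sigma^P\|^{-2}$ together with the Cauchy--Schwarz factor.

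To close the loop I feed the right--hand side back. Since $\mathcal{Z}^{\lambda,\mathbf{Q}}(0,m)$ is the integrand in the martingale representation of $E^{\mathbf{Q}}[\,Y^{\lambda}(t_j,e^{\mathcal{X}_{t_j}(m)})\mid\mathcal{F}_{\cdot}\,]$, the It\^o isometry bounds the displayed expectation by the conditional variance of the terminal difference, hence by $K_3^2\,E^{\mathbf{Q}}\|\mathcal{X}_{t_j}(m)-\mathcal{X}_{t_j}(m-1)\|^2\le K_3^2\|\mathcal{X}(m)-\mathcal{X}(m-1)\|^2_{\mathfrak{S}[t_{j-1},t_j]}$, using the Lipschitz property of $Y^{\lambda}(t_j,e^{\cdot})$. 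With $\Delta_j\le 1/(8K_3^2K_4)$ this gives a contraction factor at most $\frac14$ in the squared norm, i.e. $\|\mathcal{X}(m+1)-\mathcal{X}(m)\|_{\mathfrak{S}[t_{j-1},t_j]}\le\frac12\|\mathcal{X}(m)-\mathcal{X}(m-1)\|_{\mathfrak{S}[t_{j-1},t_j]}$. A telescoping argument identical to that in Proposition \ref{proposition1} shows $\{\mathcal{X}(m)\}$ is Cauchy, produces the limit $\mathcal{X}$ solving (\ref{FDE_for_x}), and gives the stated geometric rate. The decoupling field at $t_{j-1}$ is then $Y^{\lambda}(t_{j-1},e^{\mathcal{X}_{t_{j-1}}})=E^{\mathbf{Q}}[\,Y^{\lambda}(t_j,e^{\mathcal{X}_{t_j}})\mid\mathcal{F}_{t_{j-1}}\,]$, whose Lipschitz continuity with the same constant $K_3$ is guaranteed by Delarue's global estimate, and $\mathcal{Z}^{\lambda,\mathbf{Q}}(0)$ on $[t_{j-1},t_j]$ is read off from the martingale increment. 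Running the induction from $[t_{J-1},T]$ down to $[0,t_1]$ assembles $\mathcal{X}$, $Y^{\lambda}$ and $\mathcal{Z}^{\lambda,\mathbf{Q}}(0)$ on all of $[0,T]$.

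It remains to verify the two concluding claims. The process $Y^{\lambda}_t=E^{\mathbf{Q}}[\lambda g(\mathcal{S}_{\cdot})+\int_0^T\theta_sds\mid\mathcal{F}_t]$ is a bounded martingale by Assumption (A3) and the boundedness of $\theta$ (cf. (\ref{const1})), so its martingale representation $\int_0^{\cdot}\langle\mathcal{Z}^{\lambda,\mathbf{Q}}_s(0),d\mathcal{B}_s\rangle$ is a $\mathbf{Q}$-\emph{BMO} martingale; as $N$ in (\ref{nonlinear_measure}) is affine in $\mathcal{Z}^{\lambda,\mathbf{Q}}(0)$ with bounded coefficients, it is $\mathbf{Q}$-\emph{BMO} as well, whence $\mathcal{E}(N)$ is uniformly integrable and Corollary \ref{changemeasurecorollary} applies. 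Because the choice (\ref{nonlinear_measure}) annihilates the driver of (\ref{FDE_change_of_measure}), we have $V^{\lambda,\mathbf{Q}}\equiv 0$, and (\ref{formulaforVP}) collapses to $V^{\lambda,\mathbf{P}}_t=-\int_0^t\langle\mathcal{Z}^{\lambda,\mathbf{Q}}_s(0),d[\mathcal{B},N]_s\rangle=\int_0^tF_s(\mathcal{Z}^{\lambda,\mathbf{Q}}_s(0))ds$, exactly as in the computation preceding (\ref{backwardsde}). The main obstacle is the coupled forward--backward nature of the Picard map: the forward drift depends, via $N$, on the backward control $\mathcal{Z}^{\lambda,\mathbf{Q}}(0)$, itself the martingale representation of a nonlinear functional of the forward terminal state $\mathcal{X}_{t_j}$. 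Turning this loop into a genuine contraction rests squarely on the global Lipschitz regularity $K_3$ of the decoupling field and on working over a short interval, which is precisely the content of the mesh restriction $\Delta_j\le 1/(8K_3^2K_4)$.
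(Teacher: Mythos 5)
Your proposal is correct and follows essentially the same route as the paper's proof: the difference of Picard iterates reduces to a pure finite-variation term, which is controlled via Cauchy--Schwarz over the short interval, the It\^o isometry, and the uniform Lipschitz bound $K_3$ on the decoupling field $Y^{\lambda}(t_j,e^{\cdot})$ from Delarue's estimate, yielding the contraction factor $\tfrac14$ in the squared $\mathfrak{S}$-norm under the mesh restriction $\Delta_j\le 1/(8K_3^2K_4)$; the concluding \emph{BMO} and $V^{\lambda,\mathbf{P}}$ claims are handled as in the paper.
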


\begin{proof}
The proof is similar to the proof of Proposition \ref{proposition1},
so we only sketch it. For fixed $j$, it is easy to verify that
$\mathcal{X}(m+1)\in\mathcal{S}[t_{j-1},t_j]$, if
$\mathcal{X}(m)\in\mathcal{S}[t_{j-1},t_j]$.

Next, consider the difference
$\delta\mathcal{X}(m)=\mathcal{X}(m+1)-\mathcal{X}(m)$,
\begin{align*}
&\ ||\delta\mathcal{X}(m)||^2_{\mathfrak{S}[t_{j-1},t_j]}\\
=&\
E^{\mathbf{Q}}\left\{\sup_{t\in[t_{j-1},t_j]}\sum_{i=1}^d\left|\int_{t_{j-1}}^{t}
\langle\sigma_s^i,d[\mathcal{B},\delta
N(m-1)]_s\rangle\right|^2\right\}\\
=&\ E^{\mathbf{Q}}\left\{\sup_{t}\sum_{i}
\left|\int_{t_{j-1}}^t\left(\frac{\gamma}{2}\langle\sigma_s^{i},\delta\mathcal{Z}_s^{\lambda,\mathbf{Q}}(0,m-1)\rangle\right.\right.\right.\\
&\ \ \ \ \ \
\left.\left.\left.-\frac{\gamma\langle\sigma_s^{i},\sigma_s^{P}\rangle}{2||\sigma_s^P||}\langle\sigma_s^P,\delta\mathcal{Z}_s^{\lambda,\mathbf{Q}}(0,m-1)\rangle
\right)ds\right|^2\right\}\\
\leq&\
K_4\Delta_jE^{\mathbf{Q}}\left\{\int_{t_{j-1}}^{t_j}||\delta\mathcal{Z}_s^{\lambda,\mathbf{Q}}(0,m-1)||^2ds\right\}\\
=&\
K_4\Delta_jE^{\mathbf{Q}}\left\{\left|\int_{t_{j-1}}^{t_j}\langle\delta\mathcal{Z}^{\lambda,\mathbf{Q}}(0,m-1),d\mathcal{B}_s\rangle\right|^2\right\}\\
\leq&\ 2K_4\Delta_j
E^{\mathbf{Q}}\left\{\left|Y^{\lambda}\left(t_{j},e^{\mathcal{X}_{t_j}(m)}\right)
-Y^{\lambda}\left(t_{j},e^{\mathcal{X}_{t_j}(m-1)}\right)\right|^2\right\}\\
\leq&\
2K_3^2K_4\Delta_{j}||\delta\mathcal{X}(m-1)||^2_{\mathfrak{S}[t_{j-1},t_j]}\leq\frac14
||\delta\mathcal{X}(m-1)||^2_{\mathfrak{S}[t_{j-1},t_j]}.
\end{align*}
We iterate the above inequality and obtain
$$||\delta\mathcal{X}(m)||_{\mathfrak{S}[t_{j-1},t_j]}\leq \left(\frac12\right)^{m}||\mathcal{X}(1)-\mathcal{X}(0)||_{\mathfrak{S}[t_{j-1},t_j]}.$$
Hence, for any natural number $p$,
\begin{align*}
||\mathcal{X}(m+p)-\mathcal{X}(m)||_{\mathfrak{S}[t_{j-1},t_j]}
&\leq
\sum_{j=1}^{p}||\mathcal{X}(m+j)-\mathcal{X}(m+j-1)||_{\mathfrak{S}[t_{j-1},t_j]}\\
&\leq\left(\frac12\right)^{m-1}\times||\mathcal{X}(1)-\mathcal{X}(0)||_{\mathfrak{S}[t_{j-1},t_j]}.
\end{align*}
Letting $m\uparrow\infty$, we deduce that $\{\mathcal{X}(m)\}_{m\geq
0}$ is a Cauchy sequence, and converges to some limit $\mathcal{X}$.
On the other hand, letting $p\uparrow\infty$, we obtain the
convergence rate.

The rest of the proof is to verify that $N$ defined by
(\ref{nonlinear_measure}) is a $\mathbf{Q}$-\emph{BMO} martingale,
which follows from the $\mathbf{Q}$-\emph{BMO} martingale property
of
$$\int_0^{\cdot}\langle\mathcal{Z}^{\lambda,\mathbf{Q}}_s(0),d\mathcal{B}_s\rangle.$$
\end{proof}

\small

\end{document}